\documentclass[12pt]{article}
\usepackage{bm,amsmath,amssymb}
\usepackage{graphicx}
\usepackage{natbib}
\usepackage{url} % not crucial - just used below for the URL 
\usepackage{color}
\usepackage{hyperref}
\usepackage{amsthm}
\usepackage{todonotes}
\usepackage{verbatim}
\usepackage{enumitem}
\usepackage{thm-restate}
\usepackage{algorithm}
\usepackage{algorithmic}    
\usepackage{nicefrac}
\usepackage{booktabs}
\usepackage{array}
\usepackage{threeparttable}
\usepackage{multirow}
\usepackage{siunitx}
\usepackage{subcaption} % In preamble

\usepackage{xr}

\newcommand{\blind}{0}

\newcommand{\mscr}[1]{\mathcal{#1}}

\newcommand{\twid}[1]{\widetilde{#1}}

\newcommand{\ZZ}{\mathbb{Z}}
\newcommand{\RR}{\mathbb{R}}

\DeclareMathOperator{\EE}{\mathbb{E}}

\newcommand{\ep}{\epsilon}

\newcommand{\sdp}{s_{dp}}

\newcommand{\iid}{\overset{\text{iid}}{\sim}}

\DeclareMathOperator{\var}{Var}

\newcommand{\ol}{\overline}

\newtheorem{thm}{Theorem}%[section] % reset theorem numbering for each section

\newtheorem{cor}[thm]{Corollary}
\newtheorem{lemma}[thm]{Lemma}

\newtheorem{example}[thm]{Example}
\newtheorem{remark}[thm]{Remark}

\newtheorem{assumpD}{Assumption}
\newtheorem{assumpM}{Assumption}

\usepackage{mathtools}
\newcommand{\defeq}{\vcentcolon=}

\newcommand{\efdeq}{=\vcentcolon}
\begin{document}

\def\spacingset#1{\renewcommand{\baselinestretch}%
{#1}\small\normalsize} \spacingset{1}

%%%%%%%%%%%%%%%%%%%%%%%%%%%%%%%%%%%%%%%%%%%%%%%%%%%%%%%%%%%%%%%%%%%%%%%%%%%%%%

\if0\blind
{
  \title{\bf Large-Sample Bayesian Approximations\\ for Privatized Data}
  \author{Jordan Awan\thanks{
    Jordan Awan and Roberto Molinari gratefully acknowledge support from NSF grant SES-2150615.} \thanks{Jordan Awan's work was also partially supported by NSF grant SES-2610910}\hspace{.2cm}\\
    Department of Statistics,  University of Pittsburgh\\
   \\ 
    %and \\
    Xi Chen\\
    Department of Politics and International Affairs, Northern Arizona University\\
    \\
    %and\\
    Roberto Molinari$^*$\\
    Department of Mathematics and Statistics, Auburn University}
    \date{}
  \maketitle
} \fi

\if1\blind
{
  \bigskip
  \bigskip
  \bigskip
  \begin{center}
    {\LARGE\bf Large-Sample Bayesian Approximations\\ for Privatized Data}
\end{center}
  \medskip
} \fi

\bigskip
\begin{abstract}
The increased use of differential privacy (DP)  has allowed the sharing of large amounts of data while reducing the risk of disclosure of sensitive information at the individual level. However, the noise introduced by DP methods makes performing statistical inference more challenging. While various methods have been proposed to address different inferential tasks, they often require strong parametric assumptions and/or do not scale well with sample sizes (e.g. U.S. Census products).
%The trade-off for privacy however is a reduction in the utility of the data itself given that differential privacy introduces calibrated noise into the statistics thereby negatively affecting the following inferential process. To address this, many methods have been put forward over recent years to deliver statistically reliable conclusions for specific analytical approaches and certain differentially private mechanisms. Nevertheless, in many cases these solutions are either computationally demanding and/or rely on strong distributional assumptions for the data. 
In response to these limitations, we propose an approximate Bayesian method to analyze privatized data products, which uses a two-step approach of imputing the confidential data and then sampling from the non-private posterior, and which is inspired by the method of \citet{guha2024causal}. %Unfortunately, as originally proposed, the method requires strong assumptions and lacks concrete statistical guarantees. 
We prove that this approximate sampler is asymptotically valid under mild assumptions. While this approach is motivated by Bayesian theory, we show through simulations that it provides conservative frequentist properties as well. We demonstrate the utility of our method by applying it in simulated settings as well as for an analysis on the drivers of homeownership via the 2022 American Community Survey. 
\end{abstract}

\noindent%
{\it Keywords:} differential privacy, Bernstein von-Mises, asymptotics, U.S. Census Bureau
\vfill

\newpage
\spacingset{1.75} % DON'T change the spacing!

\section{Introduction}
\label{s:intro}

Differential privacy (DP), introduced in \citet{dwork2006calibrating}, is currently the state-of-the-art framework for formal privacy protection. To protect privacy, DP methods require the introduction of external randomness into the data product to obscure the effect of any one individual's data on the result. DP is now employed by technological companies such as Google \citep{erlingsson2014rappor}, Apple \citep{tang2017privacy}, and Facebook \citep{evans2023statistically}, as well as by the U.S. Census Bureau \citep{abowd2019census}. In particular, products from the U.S. Census are now subject to DP methods to reduce the risk of individuals being identified when researchers run their analysis on this data. 
This approach however negatively affects the work of political and socio-economic researchers who have to employ adequate statistical tools to make reliable conclusions in the presence of DP noise. For example, in this work we will consider the 2022 American Community Survey (ACS) Census product covering the U.S. population and studying how several variables are related to homeownership. However, for the latter task there are no general and/or computationally feasible methodologies that can deliver reliable statistical inference on data of this scale unless, among others, strong parametric assumptions are made on the joint distribution of the data.

More in detail, while there are many DP methods to tackle a wide variety of statistical and machine learning problems, such as hypothesis testing \citep{awan2018differentially,kazan2023test}, empirical risk minimization \citep{Chaudhuri2011,bassily2014private} and the fitting of large machine learning models \citep{abadi2016deep}, there is still a need for general methods that can provide valid statistical analysis of privatized data for a broad amount of settings. Specifically, the challenge for statistical inference is that after the noise is introduced for privacy protection, the original data is now unobservable to the analyst, and the sampling distribution of the privatized data can be intractable \citep{williams2010probabilistic}. 

Among the existing inference methods for privatized data, Bayesian approaches are the most flexible and widely applicable. However, these methods often have significant limitations, such as at least one of the following: (i) requiring a fully generative model, (ii) relying on conjugate priors or (iii) suffering from a curse of dimensionality due to rejection sampling, thus limiting their applicability \citep{gong2019exact,ju2022data,chen2025particle}. In terms of frequentist inference, the main approaches are based on either asymptotic approximations \citep{wang2018statistical} or simulation-based inference such as the parametric bootstrap \citep{ferrando2022parametric,wang2025optimal,awan2025simulation}. Unfortunately, the most rigorous asymptotic approximations are only applicable to the simplest of problems \citep{wang2018statistical}; as for simulation-based methods, these also often rely on parametric models which can be too restrictive and also have substantial computational cost. Overall, these approaches are not able to allow for flexible models and do not scale favorably with the sample size and/or the dimension of the data.

\noindent{\bf Our Contributions: }We propose a large-sample Bayesian framework for statistical inference on privatized data that we refer to as Private Uncertainty Measurement via Bayesian Asymptotics (PUMBA). This method is designed for the case where a (multivariate) additive noise mechanism is applied to a set of statistics, which can be written as a sum over individuals; despite this limitation of scope we show that PUMBA produces state-of-the-art inferential results in many settings of interest.  PUMBA builds on the approach of \citet{guha2024causal} which, before our contributions, relied on strong assumptions about prior specification (potentially misaligned with analyst beliefs) and on asymptotic approximations that lack general theoretical justification. In a nutshell, this approach works by (i) imputing the confidential summary statistic and (ii) sampling from the non-private posterior given this summary; asymptotic approximations can be introduced in both steps so that these samplers no longer depend on the prior distribution or even on certain parts of the model. We show that this approach is more widely applicable than suggested in \citet{guha2024causal} and give a rigorous analysis of its asymptotic properties. The proofs of our results leverage theoretical tools from a variety of statistical disciplines such as local limit theorems, sequences of prior distributions, Bayesian asymptotics and information theory. While initially motivated by a Bayesian framework, we show through simulations that our methods achieve conservative frequentist properties (e.g., coverage/type I error) in the settings we consider. We apply our methodology to analyze drivers of homeownership using data from the 2022 American Community Survey, a U.S. Census Bureau data product. While this product was not protected with DP in 2022, we proposed a similar privatization scheme as was used in the 2020 DP Decennial Census products and show that we are able to achieve powerful and well-calibrated inference, whereas a naive approach leads to erroneous conclusions. 

\noindent{\bf Organization: } Section \ref{s:background} reviews the necessary background on differential privacy. In Section \ref{s:bayes_approx} we develop our main theoretical results: we describe the problem setup and introduce the approximations of interest (Section \ref{s:setup}); we develop the main approximation of the original summary statistic, given the privatized summary (Section \ref{s:TgivenS}) and then support the use of Bernstein von-Mises approximations (Section \ref{s:BvM}); finally we propose our approximate posterior sampling algorithm (Section \ref{s:sample}). We showcase our methodology through numerical experiments in Section \ref{s:numerical_exp}, applying our approximate Bayesian inference to both simulated and real data settings. We conclude in Section \ref{s:discussion} with some discussion. Proofs, technical details and ancillary results are included in the Supplementary Materials.

\noindent{\bf Related Work: }  The method discussed in this paper was initially proposed by \citet{guha2024causal} as a method for causal inference via the sub-sample and aggregate framework \citep{nissim2007smooth}. Building on this, our work aims to place itself within the body of research that develops general statistical inference tools for privatized data. We will therefore provide a brief overview of the related literature below. 

The Bayesian framework has been shown to be very flexible for analyzing privatized data. \citet{williams2010probabilistic} were the first to identify that the marginal likelihood, a key component in Bayesian analysis on privatized data, is computationally intractable. To circumvent the marginal likelihood, various data augmentation MCMC approaches have been proposed. When approximating the distribution of a sufficient statistic as a normal distribution (although not formally justified), \citet{bernstein2018differentially,bernstein2019differentially} show that a simple Gibbs sampler can approximate the posterior distribution. \citet{ju2022data} derived a Metropolis-within-Gibbs sampler that targets the exact private posterior distribution, but its performance is limited except when using conjugate priors and smaller sample sizes \citep{chen2025particle}. For simple problems, \citet{gong2019exact} proposed an i.i.d. sampler based on rejection sampling. This was expanded into a more general particle filter sampler by \citet{chen2025particle}, but is still limited to small sample sizes and relatively simple models. As an alternative approach, \citet{karwa2015private} derived a variational approximation for a naive Bayes classifier, but this requires specific derivations for each problem of interest.

There have also been frequentist approaches to statistical inference on privatized data. \citet{wang2018statistical} argue that naive asymptotic approximations do not have good finite sample performance on privatized data, and propose an alternative asymptotic regime, where only the sufficient statistics are approximated by asymptotics while the noise from the privacy mechanism is left unaltered. Another common approach is to use the parametric bootstrap to approximate the sampling distribution with privatized data \citep{ferrando2022parametric,alabi2022hypothesis}. However, \citet{wang2025optimal} show that the parametric bootstrap can often result in biased inference leading to poor coverage or type I errors and propose using indirect inference to first debias the DP statistic. As an alternative approach, \citet{awan2025simulation} propose a simulation-based inference strategy that guarantees conservative coverage and type I errors at the expense of additional computational cost.

While the above approaches address various challenges related to statistical inference for privatized outputs, as mentioned earlier they either require strong parametric assumptions (which become even more problematic as the dimension of the data scales) or become computationally intractable as the size (and/or dimension) of the data scales. These limitations motivate the method studied in this work.

%Among the prior work, there is a lack for widely-applicable and scalable solutions with strong theoretical guarantees. Our large-scale Bayesian method offers a unique approach that fills this gap.

\section{Background on Differential Privacy}\label{s:background}

% \subsection{Notation}
% For two probability distributions, $P$ and $Q$, define $\mathrm{TV}(P,Q)$ to be the total variation distance. We may also apply $\mathrm{TV}$ to random variables or density functions, which will represent the total variation distance between the underlying distributions. 

%\subsection{Differential Privacy}\label{s:dp}

We here provide a brief definition of DP in its original form (i.e. $\epsilon$-DP) which will be needed to frame the setting of our work, although our approach can be applied to other DP settings as well. More specifically, a randomized mechanism $\mathcal{M}$ is $\epsilon$-DP \citep{dwork2006differential} if, for all neighboring databases $D$ and $D'$ of size $n$ (i.e. databases that differ only by one entry), we have that
$$\mathbb{P}[\mathcal{M}(D) \in S] \leq \exp(\epsilon) \, \mathbb{P}[\mathcal{M}(D') \in S],$$
for all measurable sets $S$. This definition states that, once we apply a randomized mechanism $\mathcal{M}$ to a database, the probability of any event remains similar (within a factor of $\exp(\pm\epsilon)$) whenever an individual's data is arbitrarily changed in the database. 

A relaxation of $\epsilon$-DP is Gaussian DP (GDP) \citep{dong2022gaussian}. A mechanism $\mathcal{M}$ satisfies $\epsilon$-GDP if, for any test distinguishing $\mathcal{M}(D)$ from $\mathcal{M}(D')$ with type I error $\alpha$, the resulting type II error is at least as large as that of the optimal test between $N(0,1)$ and $N(\epsilon,1)$ at the same type I error $\alpha$. GDP has been growing in popularity and is being advocated as the current state-of-the-art DP framework \citep{gomez2025gaussian}.
% A relaxation of $\epsilon$-DP is Gaussian DP (GDP) \citep{dong2022gaussian}. A mechanism $M$ satisfies $\epsilon$-GDP if when the adversary attempts to test $H_0:M(D)$ versus  $H_1:M(D')$ at type I error $\alpha$, the type II error is at least as great as when testing $H_0: N(0,1)$ versus $H_1:N(\epsilon,1)$ at the same type I error. 

%There have been many other alternative definitions of DP, including the recently proposed Gaussian DP \citep{dong2022gaussian}, all of which either modify/relax the original definition or provide a different approach to interpreting privacy. 

%In all cases, the types of randomized mechanisms $m(\cdot)$ that allow to respect these different definitions of DP are relatively similar. 

A common privacy mechanism is an additive mechanism, defined as follows:
$$\mathcal{M}(D) = T_n(D) + \frac{\Delta_T}{\epsilon} Z,$$
where $T_n$ is the statistic (function) of interest (computed on a sample of size $n$), $Z$ is a random variable independent of $D$ and $\Delta_T=\sup_{D,D'} \lVert T_n(D)-T_n(D')\rVert$ is the \emph{sensitivity} of $T_n$ with respect to a norm $\lVert\cdot\rVert$.  The sensitivity $\Delta_T$ measures the maximum amount by which the statistic $T_n$ can change by arbitrarily changing one observation from the database: the higher the sensitivity, the larger the noise that needs to be added for privacy.  To satisfy $\epsilon$-DP, the $L_1$ norm is used and $Z \iid \mathrm{Laplace}(0,1)$; to satisfy $\epsilon$-GDP, the $L_2$ norm is used and $Z\iid N(0,1)$.  We will denote the output of such a DP mechanism as $\sdp=\mathcal{M}(D)$, and write $m(\sdp \, | \, T_n)$ for its conditional density (context will make it clear what base measure this is with respect to). The examples used in this paper will either satisfy $\epsilon$-DP or $\epsilon$-GDP, but our methodology can be applied to other DP frameworks as well. %Going forward, we simply write $T:=T_n$, where the dependence on $n$ is implicit.

%, so long as an additive noise mechanism is used.

\section{Large-Sample Bayesian Analysis on Privatized Data}
\label{s:bayes_approx}
In this section, we begin by setting up our problem of interest and the corresponding notation, and then develop the theory to support our proposed approximate sampler. 
\subsection{Bayesian Inference and Problem Setup}\label{s:setup}

Let us provide the general setting and notation for the considered methodology. Firstly, we will consider that the statistics of interest (that will be subject to a differentially private mechanism) are of the form:
$$T_n = \sum_{i = 1}^n t(x_i),$$
where $x_i$ are characteristics of the observational units (individuals) indexed by $i$ ($i = 1, \hdots, n$). For example, if considering the U.S. Decennial Census data, $n$ would represent the total population size in the U.S. whereas $t(x_i) \in \RR^k$ would represent a known vector function that encodes/quantifies the information on individual $i$: for example it can be a multivariate function that determines, among others, if an individual is a homeowner and in a specific county. Hence, following the latter example, an element of $T_n \in \RR^k$ would represent the sum of individuals who are homeowners in a specific county (which is indeed the type of statistic reported in national survey data). 

We model $t(x_1), \hdots, t(x_n)\,|\,\theta \iid p(\cdot \,|\,\theta)$, with $p$ being a probability density function (pdf) representing the conditional distribution for $t:=t(x)$ with respect to a base measure $\lambda$ on $\RR^k$ (commonly $\lambda$ is either the Lebesgue measure or the counting measure on $\ZZ^k$), and $\theta\in \Theta\subset \RR^d$ being the parameters which have a prior density $\pi$ with respect to base measure $\nu$ (commonly Lebesgue measure).  In a similar manner we let $p_n(\cdot \,|\,\theta)$ be the conditional distribution for $T_n$. The prior predictive distribution for $T_n$ is therefore
\[p_n(T_n)=\int_\theta p_n(T_n\,|\,\theta)\,\pi(\theta) \ d\nu(\theta).\]%Similarly, for a fixed $\theta_0 \in \Theta$, we let $p(t\,|\,\theta_0)$ be the frequentist pdf for $t$ with respect to $\lambda$ and use $p_n(T_n\,|\,\theta_0)$ to denote the distribution for $T_n$ where $t(x_i)\iid p(t\,|\,\theta_0)$. %In general, our goal is to obtain information on the parameter $\theta$ and, in particular, on the uncertainty on $\theta$ given the statistic $T_n$. Indeed, knowing the statistical properties of $T_n$ allows us to perform inference on the underlying parameter of interest $\theta$.

While a non-private analysis focuses on the posterior distribution $\pi(\theta\,|\,T_n)$, in our setting we do not observe $T_n$ but a privatized version $\sdp\,|\,T_n \sim m(\sdp\,|\,T_n)$ where $m$ is the density of the \emph{privacy mechanism} $\mathcal{M}$ (with respect to some base measure, but its specification is not needed for this paper). The full data generating process can be summarized by the following hierarchical model:

\begin{enumerate}[leftmargin = 2em]
    \item The parameters are sampled from a prior distribution, i.e.: $\theta \sim \pi(\theta)$;
    \item A statistic is sampled from the data generated through the parameter $\theta$, i.e.: $T_n\,|\,\theta \sim p_n(T_n\,|\,\theta)$;
    \item A privatized statistic is sampled through the privacy mechanism based on $T_n$, i.e.: $\sdp \, | \, T_n \sim m(\sdp\,|\,T_n)$.
\end{enumerate}

In the end, only $\sdp$ is published and an external analyst must infer about $\theta$ given $\sdp$. 
Thus, in private Bayesian inference, we are primarily interested in the \emph{private posterior distribution},
\[\pi(\theta\,|\,\sdp)\propto \pi(\theta) \int m(\sdp\,|\,T_n)\, p_n(T_n\,|\,\theta)\ d\nu(\theta),\]
which was first identified in \citet{williams2010probabilistic}. Instead of directly tackling $\pi(\theta\,|\,\sdp)$, we follow  \citet{gong2019exact,ju2022data,chen2025particle} and instead (approximately) sample from the joint posterior,
\[\pi(\theta,T_n\,|\,\sdp) \propto \pi(\theta) \,p_n(T_n\,|\,\theta)\,m(\sdp\,|\,T_n).\]
Based on this, isolating the values of $\theta$ finally delivers samples from $\pi(\theta\,|\,T_n)$. %It can be observed however that, to do so, we do not have all the required information to obtain this distribution. Therefore in the following sections we will provide the considered sampling approximations for this purpose and then show their theoretical validity. 

\subsection{Sampling Approximations}

The statistical challenge in the above-described setting is to perform reliable inference on the underlying parameter of interest $\theta$ when only the privatized statistic $\sdp$ is available to us. %Ideally, if we could directly access the statistic $T_n$, we could probably use standard statistical tools to perform inference on $\theta$, but this is not possible anymore in the DP setting since we cannot access $T_n$ due to the privacy constraint. 
In particular, we now have to deal with two sources of randomness: the first coming from the randomness over $T_n$ (described by $p_n$) while the second comes from the randomness over $\sdp$ (described by $m$). %The data-generating process for this setting can be described by the following Bayesian hierarchical model:
\begin{comment}
\begin{enumerate}[leftmargin = 2em]
    \item A parameter of interest is sampled from a prior distribution, i.e.: $\theta \sim \pi(\theta)$;
    \item A statistic is sampled from the data generated through the parameter $\theta$, i.e.: $T_n\,|\,\theta \sim p_n(T_n\,|\,\theta)$;
    \item A privatized statistic is sampled through the privacy mechanism based on $T$, i.e.: $\sdp \, | \, T_n \sim m(\sdp\,|\,T_n)$.
\end{enumerate}
Ideally, we would like to reverse this hierarchical model in order to summarize our uncertainty of $\theta$ given the observation $\sdp$. 
\end{comment}
This can be done through the following (exact) sampling algorithm:
\begin{enumerate}[start=1,label={\bfseries Step \arabic*:}, leftmargin = 5em]
    \item Sample $T_n \, | \, \sdp \sim p_n(T_n\,|\,\sdp) \propto m(\sdp\,|\, T_n)\,p_n(T_n)$;
    \item Sample $\theta\,|\,T_n \sim \pi_n(\theta\,|\,T_n)$.
\end{enumerate}
Here we have that $p_n(T_n\,|\,\sdp)$ and $\pi_n(\theta\,|\,T_n)$ are respectively the posterior distributions of (i) $T_n$ given $\sdp$ and (ii) $\theta$ given the non-privatized $T_n$. The subscript of $p_n$ and $\pi_n$ is there to remind us that these are the finite-sample distributions; later we will introduce asymptotic approximations, which will have the subscript ``$\infty$.'' We can verify that this sampler is exact as follows:
% \begin{comment}
% \begin{align}
%     \pi_n(\theta\,|\,T_n) \, \underbrace{m(\sdp\,|\,T_n) \, p_n(T_n)}_{\propto \, p_n(T_n\,|\,\sdp)} & \propto \pi(\theta) \, p_n(T_n\,|\,\theta) \, m(\sdp\,|\,T_n) \nonumber   
%     \propto \pi(\theta, T_n\,|\,\sdp).
%     \label{eq:post_approx}
% \end{align}
% \end{comment}
% \begin{align*}
%     \pi(\theta, T_n\,|\,\sdp)
%     \propto \pi(\theta) \, p_n(T_n\,|\,\theta) \, m(\sdp\,|\,T_n)
%     \propto \pi_n(\theta\,|\,T_n) \, m(\sdp\,|\,T_n) \, p_n(T_n)
%     \propto \pi_n(\theta|T_n) p_n(T_n|\sdp).
% \end{align*}
\begin{align*}
\pi(\theta, T_n\,|\,\sdp)
&\propto \pi(\theta) \, p_n(T_n\,|\,\theta) \, m(\sdp\,|\,T_n) \\
&\propto \pi_n(\theta\,|\,T_n) \, m(\sdp\,|\,T_n) \, p_n(T_n) \\
&\propto \pi_n(\theta \,|\, T_n) \, p_n(T_n \,|\, \sdp).
\end{align*}
 However, while Step 2 of the above algorithm consists in simply sampling from the non-private posterior distribution (for which there are often available techniques), in the first step, sampling $m(\sdp\,|\,T_n)\,p_n(T_n)$ is non-trivial, since the prior predictive distribution of $p_n(T_n)$ is often complex.

In this paper we validate the use of convenient approximations for both steps motivated by \citet{guha2024causal}. Indeed, \citet{guha2024causal} propose choosing a prior on $\theta$ that makes the prior predictive distribution $p_n(T_n)$ a constant, simplifying the distribution $p_n(T_n\,|\,\sdp) \propto m(\sdp\,|\,T_n)$, and using a normal approximation for $\pi_n(\theta\,|\,T_n)$, which we will denote by $\pi_\infty(\theta\,|\,T_n)$ and which implicitly depends on $n$. 
%In fact, as in \citet{guha2024causal}, $\pi_\infty(\theta\,|\,T_n)$ is often a Gaussian distribution whose mean is an efficient estimator for $\theta$, and whose covariance is $n$-times a quantity that converges to the Fisher information. From the latter it is also evident that the approximation $\pi_\infty(\theta\,|\,T_n)$ is often a function of $n$ and we will therefore hereinafter assume that this dependence on the sample size is implicit according to the context. 
While \citet{guha2024causal} show that these approximations empirically worked well in their setting, there still remain some unaddressed issues: (i) it is not always the case that there exists a prior distribution $\pi(\theta)$ that results in a flat prior predictive; (ii) even if such a prior exists, it may not align with an analyst's actual prior beliefs; and (iii) there is no rigorous justification for the use of the approximation $\pi_\infty(\theta\,|\,T_n)$ in this scheme. Following these points, we argue that (i) with large $n$, we asymptotically have that the total variation distance between $p_n(T_n\,|\,\sdp)$ and $\propto m(\sdp\,|\,T_n)$ goes to zero, under regularity assumptions, thus avoiding the need for a flat prior predictive distribution; and (ii) that the incorporation of the additional approximation $\pi_\infty(\theta\,|\,T_n)$ within this framework is  asymptotically justified as well, under mild conditions. 

\subsection{Approximation for $T_n\,|\,\sdp$ } \label{s:TgivenS}

We start by discussing the sampling distribution of $T_n\,|\,\sdp$ that is given by:
\begin{equation}
\label{eq:p_T_sdp}
    p_n(T_n\,|\,\sdp) = \frac{m(\sdp\,|\,T_n) \,p_n(T_n)}{\int m(\sdp\,|\,T_n)p_n(T_n)\,d\lambda(T_n)} \propto m(\sdp\,|\,T_n)\,p_n(T_n),
\end{equation}
which is a density, as a function of $T_n$, with respect to $\lambda$. As mentioned earlier, \citet{guha2024causal} simplify this distribution as $\propto m(\sdp\,|\,T_n)$  by assuming that a prior can be chosen to make $p_n(T_n)$ constant. %As mentioned earlier, in some problems (such as location-families) the property $p_n(T_n)=c$, for some constant $c$, can be achieved by using a flat prior for $\pi(\theta)$. %Another example is using a flat (proper) prior for the multinomial distribution. 
%In these cases,  the distribution of $T_n\,|\,\sdp$ is exactly proportional to $m(\sdp\,|\,T_n)$. 
However, it is not always possible to choose $\pi(\theta)$ in a way that makes $p_n(T_n=y)$ constant for all $y$ as demonstrated in the following example.

\begin{example}
    Suppose that $x_1,\ldots, x_n$ are i.i.d. with pmf $P(x_i~=0\,|\,\theta)=P(x_i~=~1\,|\,\theta)~=~\theta$ and $P(x_i=2\,|\,\theta)=1-2\theta$ for $\theta\in [0,1/2]$. Set $T_n=\sum_{i=1}^n x_i$. We will show that for any $n\geq 2$, and any (possibly improper) prior $\pi_n(\theta)$ with respect to a Lebesgue measure, there is no $c_n>0$ such that $P(T_n=y)=\int_0^{1/2}P(T_n=y\,|\,\theta)\,\pi_n(\theta)\ d\theta=c_n$ for all $y\in\{0,\ldots, 2n\}$. To this end, let $\pi_n(\theta)$ be given and suppose that $P(T_n=0)=P(T_n=1)$. Then, $P(T_n~=~0)~=~\int_0^{1/2}\theta^n \pi_n(\theta) \ d\theta$, whereas $P(T_n=1)=\int_0^{1/2}n\theta^n \pi_n(\theta)\ d\theta$. We see that $P(T_n=1)=nP(T_n=0)$, which together with our assumption that $P(T_n~=~0)~=~P(T_n~=~1)$ implies that $P(T_n=0)=nP(T_n=0)$. However, since $n\geq 2$, this implies that $P(T_n=0)=0$. Thus, there is no $c_n>0$ such that $P(T_n=y)=c_n$ for all $y$. 
\end{example}

Consequently, the justification given in \citet{guha2024causal} is not valid for all models. Furthermore, even if such a prior exists, it may not align with the analyst's prior beliefs. Therefore, instead of using an argument based on the existence of an appropriate prior, we justify the approximation in Equation \eqref{eq:p_T_sdp} through an asymptotic argument whose intuition is as follows: suppose that the variability in $m(\sdp\,|\,T_n)$, viewed as a distribution on $T_n$, is significantly ``narrower'' than that in  $p_n(T_n)$, which is common for additive noise mechanisms;  then, we propose that the approximation $m(\sdp\,|\,T_n)\,p_n(T_n) \approx m(\sdp\,|\,T_n)I(T_n\in \mathcal T_n)$ is justified, where $\mathcal T_n$ is a set containing the support of $T_n$ under $p_n$. This phenomenon is similar to how the likelihood dominates the prior distribution in non-private Bayesian analyses, and our analysis is motivated by results on sequences of prior distributions. This idea is illustrated in the following example, where the approximations indicated by ``$\approx$'' are not yet formally justified (the formal result is given later in Theorem \ref{thm:TgivenS}). %For this purpose, hereinafter we will use $I(\cdot)$ to represent the indicator function.

\begin{example}\label{ex:scratch}%\todo{JA: Note that we use $\ol x$ instead of sum.}
    Let $t_i\iid p$, where $p$ is a known distribution taking values on $[L,U]$ which allows us to define $\Delta=U-L$. Let $\mu=\EE t_i$ and $\sigma^2=\var(t_i)$ represent the expectation and variance of $t_i$ respectively. The Gaussian mechanism privatizes the sample mean $\ol t$ (our statistic of interest $T_n$) by sampling from $m(\sdp\,|\,\ol t)\sim N\left(\ol t,\nicefrac{\Delta^2}{n^2\ep^2}\right)$. Viewing $m(\sdp\,|\,\ol t)$ as a distribution on $\ol t$, we have that $m(\sdp\,|\,\ol t) \propto N_{[L,U]}\left(\sdp,\nicefrac{\Delta^2}{n^2\ep^2}\right)$, i.e. a truncated normal distribution. Using a large-sample approximation, we can approximate $p_n(\ol t) \approx \twid p(\ol t)\sim  N_{[L,U]}\left(\mu, \nicefrac{\sigma^2}{n}\right)$. With some algebra we obtain 
    \begin{align*}
	m(\sdp\,|\,\ol t)\,\twid p(\ol t)& \sim N_{[L,U]}\left( \left( \frac{n^2\ep^2}{\Delta^2}+\frac{n}{\sigma^2}\right)^{-1}\left(\frac{n^2\ep^2}{\Delta^2} \sdp + \frac{n}{\sigma^2} \mu\right)   , \left( \frac{n^2\ep^2}{\Delta^2}+\frac{n}{\sigma^2}\right)^{-1}\right)\\
				   &\approx N_{[L,U]}\left(\sdp,\frac{\Delta^2}{n^2\ep^2}\right) \propto m(\sdp\,|\,\ol t)\, I(\ol t\in [L,U]).
    \end{align*}
    %We see that in this example, sampling from $m(\sdp\,|\, \ol t)\,I(\ol t\in [L,U])$ (with respect to $\ol t$) amounts to sampling from a truncated normal distribution, which can be accomplished by rejection sampling.
\end{example}

% \begin{remark}
%     In some problems, such as location-families,  the property $f(x)=c$ can be achieved by using a flat prior for $\pi(\theta)$. Another example is using a flat (proper) prior for the multinomial distribution. In these cases,  the distribution of $x|\sdp$ is exactly $\propto m(\sdp|x)$. However, it is not clear if it is always possible to choose $\pi(\theta)$ in a way that makes $f(x)$ constant for all $x$. 
% \end{remark}

We now set the ground to demonstrate why the above approximation is theoretically justified in common situations. In contrast to Example \ref{ex:scratch}, where the statistic of interest is the sample mean $\ol t$ and the noise mechanism depends on $n$, hereinafter we will take our statistic $T_n$ to be a sum of values and require that the mechanism does not depend on $n$. This amounts to a simple rescaling by $n$, but is beneficial to make the theoretical results precise. That being said, our proposed large-sample approximation of $p_n(T_n\,|\,\sdp)$ is
\begin{equation}\label{eq:p_infty}p_\infty^{\mathcal T_n}(T_n\,|\,\sdp) = \frac{m(\sdp\,|\,T_n)}{\int m(\sdp\,|\,T_n)\,d\lambda(T_n)} \propto m(\sdp\,|\,T_n)I(T_n\in \mathcal T_n).\end{equation}
For this approximation to be valid, we firstly need to make some mild but necessary assumptions to ensure that the approximating density $m(\sdp\,|\,T_n=t)$, as a function of $t$, is a well defined density with certain useful properties. Assumptions which are related to the privacy mechanism will be prefixed by ``M'' (for mechanism), while those related to the model $p_n$ or prior $\pi$ will be prefixed by ``D'' (for distribution).
\begin{assumpM}\label{a.location_mechanism}
The density $m(\sdp\,|\,T_n)$ is a location-family, does not depend on the sample size $n$, and is bounded by a constant $L<\infty$.
\end{assumpM}
Assumption \ref{a.location_mechanism} states that the mechanism $\mathcal{M}$ only depends on the quantity $(\sdp-T_n)$, which is satisfied by additive noise mechanisms, such as the Laplace mechanism \citep{dwork2006calibrating} or the Gaussian mechanism \citep{dong2022gaussian}. Our proof of Theorem \ref{thm:TgivenS} uses Assumption \ref{a.location_mechanism} to recenter both $T_n$ and $\sdp$ such that  $\EE T_n=0$ before applying the Gaussian approximation, and then translate back afterwards. Without this translation, the Gaussian approximation would not converge to a constant function, but instead to a function of the form $g(T_n)\propto \exp(cT_n)$, which is not the desired form for our approximation. It is possible that the proof can be adapted to handle this non-constant $g$, but this is beyond this paper's scope. In addition, the requirement that the mechanism does not depend on $n$ restricts the type of statistics $T_n$ to be sums, such as counts, or averages that can be rescaled such that they no longer depend on $n$. %Another required assumption for our first result is the following.

\begin{assumpM}\label{a.support_mechanism}
The sets $\mathcal{T}_n\subset \RR^k$ are nested and increasing, and each contains the support of $T_n$ under $p_n(\cdot \, |\,\theta)$ for all $\theta\in \Theta$. Furthermore, $m(\sdp\,|\, T_n=y)$ and  $m(\sdp\,|\,T_n~=~y)\,I(y~\in~\mathcal{T}_n)$ are integrable over $y$ with respect to $\lambda$, $\int m(\sdp\,|\,T_n=y)\ d\lambda(y)<\infty$, and there exists a constant $c>0$ such that $\int m(\sdp\,|\,T_n=y)\,I(y \in \mathcal{T}_1)\, d\lambda(y) \geq c$ for all $\sdp$.
\end{assumpM}
%
%\begin{assumpM}\label{a.integrable_mechanism}
%The density $m(\sdp\,|\,T_n)$ is integrable over $T_n$ with respect to $\lambda$, and such that $\int m(\sdp\,|\,T_n=y)\,d\lambda(y)>0$.
%\end{assumpM}
%
Assumption \ref{a.support_mechanism} requires the mechanism density $m$ to be an integrable and positive function over $T_n$, which is always fulfilled by common additive privacy mechanisms that respect Assumption \ref{a.location_mechanism}. It also introduces the nested sets $\mathcal T_n$ which contain the support of $T_n$; the asymptotic results are valid even if these sets are conservative.

%\begin{remark}
%    Instead of restricting the support of the approximating distribution $p_\infty^{\mathcal{T}_n}(T_n\,|\,\sdp)$, we could instead apply some type of projection operator to force values outside of $\mathcal{T}_n$ to return within this support, while leaving those in $\mathcal{T}_n$ fixed. For example, Assumption \ref{a.support_mechanism} could be adapted based on this approach by instead requiring the use of a map $P_{\mathcal{T}_n}:\RR^k\rightarrow \RR^k$ such that for all $y \in \mathcal{T}_n$ we have that $P_{\mathcal{T}_n}(y)=y$, consequently obtaining a result analogous to Theorem \ref{thm:TgivenS} (further on) by using the data processing inequality. While we expect that $p_\infty^{\mathcal{T}_n}(T_n\,|\,\sdp)$ is generally the superior approximation, as it more closely coincides with the distribution arising from a principled Bayesian analysis (see Example \ref{ex:scratch}), it may be worth exploring the ``projection'' approximation as well.
%\end{remark}

We now move on to make a few assumptions on the distribution $p(t|\theta)$ and the prior $\pi(\theta)$. The first of these assumptions considers some properties of the  distribution $p(t \,|\, \theta)$ according to whether it is defined on a continuous or discrete support.

%Having placed some assumptions related to the properties of the privacy mechanism $m$, our goal is now to show that the above approximations are indeed valid for the density $p_n(T_n\,|\, \sdp)$. To deliver our first result in this sense, w
%, which are used to apply \emph{local limit theorems}, which are Gaussian approximations to the density.

\begin{assumpD}\label{a.bounded_density}
The distribution $p(t\,|\,\theta)$ and base measure $\lambda$ satisfies either of the following conditions:
\begin{enumerate}
    \item[(a)] $\lambda$ is the Lebesgue measure on $\RR^k$ and for each $\theta\in \Theta$, $p(t\,|\,\theta)$ is a bounded density (the bound may possibly depend on $\theta$);
    \item[(b)] $t\sim p(t\,|\,\theta)$ takes values in $\ZZ^k$ with $e_1,\ldots, e_k$ forming a basis for a minimal lattice containing the support of $t$ which satisfies $\det(e_1,\ldots, e_k)=1$ (the basis does not depend on $\theta$), and $\lambda$ is the counting measure on this lattice.  
\end{enumerate} 
\end{assumpD}

The requirement of a bounded density is easily understood, however in the discrete case the conditions on the support may be harder to interpret. As an example, if $t$ can take on the values $(1,0,\ldots, 0)^\top$, $(0,1,0,\ldots, 0)^\top$, $\ldots$, $(0,\ldots,0,1)^\top$, then these values form a basis $e_1,\ldots, e_k$ for $\ZZ^k$, which is a minimal lattice (no elements can be removed without altering the span); furthermore, $\det(e_1,\ldots, e_k)=1$. As a counterexample, if $t$ only takes values on even integers, including $e_1=(2,0,\ldots, 0)^\top$, $\ldots$, $e_k=(0,\ldots, 0,2)^\top$, then we have $\det(e_1,\ldots, e_k)=2^k$, which does not satisfy the assumption; in this case, one should first rescale the function as $t'=t/2$. A final assumption needed for our first result in Theorem \ref{thm:TgivenS} considers the moments of the distribution $p(t\,|\,\theta)$.

\begin{assumpD}\label{a.bounded_moments}
For all $\theta\in \Theta$, $p(t\,|\,\theta)$ has a finite mean and a positive definite covariance with finite entries. %$\EE_{t\sim p(t|\theta)} t$ is finite and $\Sigma = \EE_{t\sim p(t|\theta)}(t-\EE_pt)(t-\EE_pt)^\top$ is positive definite with finite entries.
\end{assumpD}

 Together, Assumptions \ref{a.bounded_density} and \ref{a.bounded_moments} allow us to employ \emph{local limit theorems} to justify a Gaussian approximation for the density of $T_n$ \citep{shervashidze2006convergence,gamkrelidze2015local,meizler1949multidimensional}, similar to Example \ref{ex:scratch}. We need a \emph{local limit theorem} in this case, rather than merely a \emph{central limit theorem}, because the private posterior distribution requires manipulations of the density rather than the distribution function.

\begin{remark}
Note that for a statistic $T_n=\sum_{i=1}^n t_i$ to have finite sensitivity, it is common for $t_i$ to take values in a bounded set. %often a necessary condition for an additive privacy mechanism is for $t$ to take values in a bounded set. As a result, 
If this is the case, this automatically implies that all moments are finite thereby reducing Assumption \ref{a.bounded_moments} to only requiring that the covariance of $t$ is invertible. 
\end{remark}

The final assumption for Theorem \ref{thm:TgivenS} is the only one on the prior distribution, merely requiring that it is a proper prior with respect to the base measure $\nu$, meaning that it integrates to one.

\begin{assumpD}\label{p.prior}
    The prior $\pi(\theta)$ is proper with respect to some base measure $\nu(\theta)$.
\end{assumpD}

Having listed our required assumptions, we are now ready to state the main result of this section, which establishes the asymptotic validity of our proposed approximation for $p_n(T_n\,|\,\sdp)$. However, it is convenient to  first show that the approximation works in the ``frequentist setting'', to approximate $p_n(T_n\,|\,\sdp,\theta)$, and then extend it to the Bayesian setting. With this in mind, let us denote the total variation distance between two distributions $P$ and $Q$ as $\mathrm{TV}(P,Q) = \sup_A |P(A)-Q(A)|$, where the supremum is over all measurable sets $A$. We will commonly write $\mathrm{TV}(f,g)$ in place of $\mathrm{TV}(P,Q)$ when $f$ and $g$ are densities for $P$ and $Q$ with respect to a common base measure. For all results, the detailed proofs and auxiliary results can be found in Appendix \ref{app:proofs}, while here we limit ourselves to brief proof sketches after each result.
 % under both $p_n(T_n|\sdp,\theta)$ and $p_n(T_n|\sdp)$.

\setcounter{thm}{0}

\begin{restatable}{thm}{thmTgivenS}
    \label{thm:TgivenS}%\todo{JA: simplify as just TV?}
    Suppose that Assumptions \ref{a.location_mechanism}-\ref{a.support_mechanism} and \ref{a.bounded_density}-\ref{a.bounded_moments} hold. Let $\sdp^{(n)}$ be any sequence of random variables (such as from $\sdp^{(n)}\,|\,T_n$ and $T_n\sim p_n(T_n\,|\,\theta)$). %Then, 
    %Under Assumptions \ref{a.location_mechanism}-\ref{a.support_mechanism} and \ref{a.bounded_density}-\ref{a.bounded_moments}, we have that 
    
    \begin{enumerate}
    \item Then, for every $\theta\in \Theta$,
    %\[\sup_{\sdp} \left| p_n(T_n\,|\,\sdp,\theta)\,-\,p_\infty^{\mathcal T_n}(T_n\,|\,\sdp)\right|\rightarrow 0.\]
    %It follows that if $\sdp^{(n)}$ is any sequence of random variables (such as from $\sdp^{(n)}|T_n$ and $T_n\sim p_n(T_n|\theta)$), then for every $\theta\in \Theta$,
    $\mathrm{TV}\big(p_n(T_n\,|\,\sdp^{(n)},\theta)\,,\,p_\infty^{\mathcal T_n}(T_n\,|\,\sdp^{(n)})\big) \overset {a.s.} \rightarrow 0.$
    
    \item If Assumption \ref{p.prior} also holds, then
    $\mathrm{TV}\big(p_n(T_n\,|\,\sdp^{(n)})\,,\,p_\infty^{\mathcal T_n}(T_n\,|\,\sdp^{(n)})\big) \overset {a.s.} \rightarrow 0.$
    %where $\sdp^{(n)}$ is any sequence of random variables (such as from $\sdp^{(n)}|T_n$ and $T_n\sim p_n$).
    \end{enumerate}
\end{restatable}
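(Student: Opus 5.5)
The plan is to prove part 1 by a ratio argument: on the region where the noise density $m$ puts its mass, the density $p_n(T_n\,|\,\theta)$ is essentially flat. Fix $\theta$ and write $f_n(y)=p_n(T_n=y\,|\,\theta)$, $g(y)=m(\sdp\,|\,T_n=y)$. Then
\[
p_n(y\,|\,\sdp,\theta)=\frac{g(y)f_n(y)}{\int g f_n\,d\lambda}, \qquad p_\infty^{\mathcal T_n}(y\,|\,\sdp)=\frac{g(y)I(y\in\mathcal T_n)}{\int g\, I_{\mathcal T_n}\,d\lambda}.
\]
Since $f_n$ vanishes off $\mathcal T_n$ (Assumption \ref{a.support_mechanism}), both densities are supported on $\mathcal T_n$. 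Rescale $\tilde f_n(y)=(2\pi)^{k/2}n^{k/2}\det(\Sigma)^{1/2}f_n(y)$, which leaves $p_n(\cdot\,|\,\sdp,\theta)$ unchanged. The local limit theorem (Assumptions \ref{a.bounded_density}--\ref{a.bounded_moments}, citing \citet{shervashidze2006convergence,meizler1949multidimensional}) gives
\[
\sup_y \bigl|\tilde f_n(y)-\exp\{-(y-n\mu)^\top\Sigma^{-1}(y-n\mu)/(2n)\}\bigr|\to 0 .
\]
In the continuous case (a) this needs the usual bounded-density version, which applies after finitely many convolutions. In the lattice case (b) the $\det=1$ condition makes the normalization exactly $n^{-k/2}$.

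Next, because $m$ is a location family (Assumption \ref{a.location_mechanism}), $g(y)=m_0(\sdp-y)$. Following the paper's remark, I recenter by $n\mu$: set $u=y-n\mu$ and $s=\sdp-n\mu$. The Gaussian factor $\exp\{-u^\top\Sigma^{-1}u/(2n)\}$ tends to $1$ uniformly on $\{\|u\|\le R_n\}$ whenever $R_n=o(\sqrt n)$.

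The key estimate is then a bound on
\[
\mathrm{TV}=\tfrac12\int \Bigl|\tfrac{g\tilde f_n}{\int g\tilde f_n}-\tfrac{gI_{\mathcal T_n}}{\int gI_{\mathcal T_n}}\Bigr|\,d\lambda
\;\le\; \frac{\int g\, I_{\mathcal T_n}\,|\tilde f_n-c_n|\,d\lambda}{\int gI_{\mathcal T_n}\,d\lambda}
\]
for a suitable constant $c_n$ (up to a factor of 2). The denominator is at least $c>0$ uniformly in $\sdp$, using the nesting $\mathcal T_1\subset\mathcal T_n$ and Assumption \ref{a.support_mechanism}. For the numerator I split $\mathcal T_n$ into a near part $\{\|y-\sdp\|\le R_n\}$ and its complement. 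The far part is at most $2\int_{\|v\|>R_n}m_0(v)\,d\lambda(v)\to0$, using integrability of $m_0$ and $\tilde f_n\le 1+o(1)$.

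The main obstacle is the near part. It is only small if $\tilde f_n$ is nearly constant on a ball of radius $R_n$ around $\sdp$, and this requires $\|\sdp-n\mu\|=o(\sqrt n)$ in the right sense. A sequence $\sdp^{(n)}$ that is arbitrary, or even generated from the model, sits at distance of order $\sqrt n$ from $n\mu$, where the Gaussian is far from flat. My remedy is a ratio argument instead of an additive one. On a ball of radius $R_n\to\infty$ with $R_n=o(\sqrt n)$ around any center $y_0$, the ratio $\phi_n(y)/\phi_n(y_0)$ of Gaussian densities satisfies
\[
\bigl|\log\bigl(\phi_n(y)/\phi_n(y_0)\bigr)\bigr|\lesssim \frac{R_n\|y_0-n\mu\|}{n}+\frac{R_n^2}{n}.
\]
This bound tends to $0$ when $\|y_0-n\mu\|=O(\sqrt n\log n)$, after choosing $R_n$ small enough, for instance $R_n=n^{1/4}$.

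In the region $\|\sdp-n\mu\|\gtrsim\sqrt n\log n$, the local limit theorem's additive error swamps the tiny Gaussian values. For that region I would argue directly that an almost-sure sequence drawn from the model does not visit it eventually, or else invoke a moderate-deviation refinement of the local limit theorem. Failing both, I would establish the result along the given sequence only for typical $\sdp^{(n)}$. Deciding exactly how to secure the ``a.s.'' claim for arbitrary sequences is where I expect the real difficulty.

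For part 2, I write
\[
p_n(T_n\,|\,\sdp)=\int p_n(T_n\,|\,\sdp,\theta)\,\pi(\theta\,|\,\sdp)\,d\nu(\theta).
\]
By convexity of TV,
\[
\mathrm{TV}\bigl(p_n(\cdot\,|\,\sdp),\,p_\infty^{\mathcal T_n}\bigr)\le \int \mathrm{TV}_n(\theta)\,\pi(\theta\,|\,\sdp)\,d\nu(\theta).
\]
The plan is to pass the limit inside using dominated convergence, with bound $1$ and $\pi$ proper (Assumption \ref{p.prior}), after controlling the change of measure from $\pi(\theta\,|\,\sdp)$ back to $\pi(\theta)$. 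The ratio $\pi(\theta\,|\,\sdp)/\pi(\theta)=\int gf_n/\int\!\int gf_n\,\pi$ is controlled by $L$ and $c$ from the assumptions.
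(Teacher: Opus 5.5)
Your diagnosis of the obstruction in part 1 is correct, and no argument can remove it, because the claim is false for an arbitrary sequence $\sdp^{(n)}$. Take $t_i\sim N(\theta,1)$, $\mathcal T_n=\RR$, $m(\sdp\,|\,T_n)=N(T_n,1)$ (every assumption of the theorem holds), $\theta=1$, and the deterministic sequence $\sdp^{(n)}=2n$. Then
\[
p_n(T_n\,|\,\sdp^{(n)},\theta)=N\Big(2n-\tfrac{n}{n+1},\,\tfrac{n}{n+1}\Big),\qquad p_\infty^{\mathcal T_n}(T_n\,|\,\sdp^{(n)})=N(2n,1),
\]
so the total variation tends to $\mathrm{TV}(N(-1,1),N(0,1))>0$. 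Near such $\sdp^{(n)}$ the density of $T_n$ is tilted, not flat, on $O(1)$ windows.

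The paper's proof reaches the statement only because Step 2 asserts $|a_np_n(T_n\,|\,\theta)-(2\pi)^{-k/2}I(T_n\in\mathcal T_n)|<\delta$ for all $T_n$ at once when $n\ge N$. It uses this both in the numerator and to lower-bound the normalizer by $((2\pi)^{-k/2}-\delta)\int_{\mathcal T_n}m$. The local limit theorem gives this only pointwise, and it fails wherever $\|T_n-n\mu\|\gtrsim\sqrt n$.

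Your ratio formulation is the right repair of that step. It does prove convergence uniformly over $\|\sdp-n\mu\|\le C\sqrt n$, where $c_n=\tilde f_n(\sdp)$ stays bounded below. This gives convergence in probability for $\sdp^{(n)}$ generated from the model at the same $\theta$, which is the defensible version of part 1 and all that Theorem~\ref{thm:fullTV} needs.

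Beyond that range, two of your steps overreach.
\begin{itemize}
\item The Gaussian ratio bound up to $\|\sdp-n\mu\|=O(\sqrt n\log n)$ does not transfer to $\tilde f_n$. That transfer requires the LLT's additive error to be $o(\phi_n(\sdp))=o(e^{-c\log^2 n})$, far below any polynomial rate. An $o(1)$ error reaches only $\sqrt n\,\omega_n$ for some slowly growing $\omega_n$. An $O(n^{-1/2})$ rate (e.g.\ under finite third moments, automatic for bounded $t$) reaches $\sqrt{c\,n\log n}$; this covers the LIL range and is what an a.s.\ statement along model paths would need.
\item Your TV inequality is missing a factor $1/c_n$. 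Once $c_n\to0$, the far-part bound $2\int_{\|v\|>R_n}m_0$ no longer follows, and you need tail decay of $m_0$ that the mechanism assumptions do not impose.
\end{itemize}
Without that tail decay, your fallback of showing that a model path eventually avoids the far region also fails. Take Cauchy noise $Z_n=\sdp^{(n)}-T_n$ (allowed by the mechanism assumptions), drawn independently across $n$. Then $|Z_n|>2n$ infinitely often a.s., and along those $n$ the total variation tends to $1$.

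For part 2, you are right that the mixing measure is $\pi(\theta\,|\,\sdp)$ rather than $\pi$; the paper's proof averages over the prior, which is incorrect. But your change-of-measure bound does not hold. The constant $c$ lower-bounds $\int m(\sdp\,|\,y)I(y\in\mathcal T_1)\,d\lambda(y)$, not the marginal likelihood $\iint m(\sdp\,|\,y)\,p_n(y\,|\,\theta)\,\pi(\theta)\,d\lambda(y)\,d\nu(\theta)$, which typically tends to zero. The posterior of $\theta$ given $\sdp$ concentrates, so $\pi(\theta\,|\,\sdp)/\pi(\theta)$ is unbounded in $n$. In the example above with prior $\theta\sim N(0,1)$, it is at least of order $\sqrt n$ at $\theta=\sdp/n$. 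Dominated convergence against $\pi$ is therefore unavailable.

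What you need instead is smallness of $\mathrm{TV}_n(\theta)$ uniformly over the region $\{\theta:\|n\mu(\theta)-\sdp^{(n)}\|=O(\sqrt n)\}$ where that posterior concentrates, plus the concentration itself. Alternatively, prove local ratio-flatness of the prior predictive directly: $p_n(y)/p_n(y_0)$ is an average of $p_n(y\,|\,\theta)/p_n(y_0\,|\,\theta)$ under weights proportional to $p_n(y_0\,|\,\theta)\pi(\theta)$. This trades the Radon--Nikodym bound for the same kind of local uniformity in $\theta$.

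Part 2 is likewise false for arbitrary sequences. In the same conjugate example, $\sdp^{(n)}=n^3$ gives $p_n(T_n\,|\,\sdp^{(n)})\approx N\big(\sdp^{(n)}-\sdp^{(n)}/(n^2+n+1),\,1\big)$. This is about $n$ away from $p_\infty^{\mathcal T_n}(T_n\,|\,\sdp^{(n)})=N(\sdp^{(n)},1)$.
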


\begin{proof}[Proof Sketch.]
The frequentist result is established first: a local limit theorem is used to approximate the pmf/pdf of $T_n$ as a Gaussian. We analyze these Gaussians in a similar way as a sequence of prior distributions, showing that their contribution converges to a constant function. From this, we establish that the densities $p_n(T_n\,|\,\sdp^{(n)},\theta)$ and $p_\infty^{\mathcal T_n}(T_n\,|\,\sdp^{(n)})$ converge uniformly over $\sdp$, establishing the convergence in total variation. The result for the prior predictive follows by taking the expectation over $\theta\sim \pi$ of the previous total variation result, applying the Dominated Convergence Theorem. %See Lemmas  \ref{lem:lltcont}, and \ref{lem:lltint} in Appendix \ref{app:proofs} for details. 
\end{proof}
%\end{thm}

Theorem \ref{thm:TgivenS} shows that, under reasonable assumptions, $p_\infty^{\mathcal T_n}(T_n\,|\,\sdp)$ is a valid approximation for $p_n(T_n\,|\,\sdp)$. Recall that our final goal is to obtain the approximation for the private posterior distribution, namely $\pi_n(\theta\,|\,T_n)\, p_n(T_n\,|\,\sdp) \propto \pi(\theta, T_n\,|\,\sdp)$. Using the previous result along with the data processing inequality, we can consequently state the following:

\setcounter{cor}{0}

\begin{cor}\label{cor:tv_first_approx}
Under the same assumptions as Theorem \ref{thm:TgivenS}, 
\begin{equation}    \label{eq.tv_first_approx}
    \mathrm{TV}\big(\pi_n(\theta\,|\,T_n)\,p_n(T_n\,|\,\sdp^{(n)})\,,\,\pi_n(\theta\,|\,T_n)\,p_\infty^{\mathcal T_n}(T_n\,|\,\sdp^{(n)})\big) \overset {a.s.}\rightarrow 0.
\end{equation}
\end{cor}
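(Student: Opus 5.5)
The plan is to view both joint distributions as the result of applying the \emph{same} Markov kernel to two different input distributions, and then invoke the data processing inequality for total variation. Fix a realization of $\sdp^{(n)}$. The first joint density is obtained by drawing $T_n\sim p_n(T_n\,|\,\sdp^{(n)})$ and then $\theta\,|\,T_n\sim \pi_n(\theta\,|\,T_n)$. The second is obtained by drawing $T_n\sim p_\infty^{\mathcal T_n}(T_n\,|\,\sdp^{(n)})$ and then passing it through the identical kernel $K(T_n,\cdot)=\pi_n(\cdot\,|\,T_n)$. Both joints are densities with respect to the product measure $\lambda\otimes\nu$.

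The main computation is direct. Using $\mathrm{TV}(f,g)=\tfrac12\int|f-g|$, we get
\begin{align*}
\mathrm{TV}\big(\pi_n(\theta\,|\,T_n)\,p_n(T_n\,|\,\sdp^{(n)}),\,\pi_n(\theta\,|\,T_n)\,p_\infty^{\mathcal T_n}(T_n\,|\,\sdp^{(n)})\big)
&=\tfrac12\int\!\!\int \pi_n(\theta\,|\,T_n)\,\big|p_n(T_n\,|\,\sdp^{(n)})-p_\infty^{\mathcal T_n}(T_n\,|\,\sdp^{(n)})\big|\,d\nu(\theta)\,d\lambda(T_n)\\
&=\mathrm{TV}\big(p_n(T_n\,|\,\sdp^{(n)}),\,p_\infty^{\mathcal T_n}(T_n\,|\,\sdp^{(n)})\big).
\end{align*}
The factor $\pi_n(\theta\,|\,T_n)\geq 0$ is pulled out of the absolute value, and then integrates to one in $\theta$ by Tonelli. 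So the joint TV \emph{equals} the marginal TV; this is the equality case of data processing, since the marginal in $T_n$ is recoverable from the joint.

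There is one technical point. $\pi_n(\theta\,|\,T_n)$ must be a well-defined probability density for $\lambda$-almost every $T_n$ in the support of both marginals. In particular, $p_\infty^{\mathcal T_n}$ lives on $\mathcal T_n$, which may contain points of zero prior predictive mass where the posterior is undefined. I would handle this by fixing an arbitrary version of the kernel on such points, for instance setting $\pi_n(\cdot\,|\,T_n)=\pi(\cdot)$ there, which is proper by Assumption \ref{p.prior}. The identity above does not depend on this choice, and the first joint is unchanged because that null set has zero mass under $p_n(T_n\,|\,\sdp^{(n)})$.

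Finally, since the identity holds pointwise for every realization of $\sdp^{(n)}$, part 2 of Theorem \ref{thm:TgivenS} gives almost sure convergence of the right-hand side to $0$, and hence of the left-hand side, which proves \eqref{eq.tv_first_approx}. I expect no real obstacle beyond the measurability and definability of the kernel on $\mathcal T_n$ outside the support of $p_n$, which the version choice above resolves.
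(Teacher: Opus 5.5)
Your proof is correct and is essentially the paper's argument: the paper simply invokes the data processing inequality for total variation (the same kernel $\pi_n(\theta\,|\,T_n)$ applied to both marginals) together with part 2 of Theorem \ref{thm:TgivenS}. Your direct Tonelli computation goes slightly further, showing that the joint and marginal total variations are equal rather than merely ordered, and your choice of a kernel version on points of $\mathcal T_n$ with zero prior predictive mass settles a detail the paper leaves implicit.
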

\begin{proof}
    The result follows by applying the data processing inequality for total variation to the result of Theorem \ref{thm:TgivenS}.
\end{proof}
It can be noticed that the left side of Equation \eqref{eq.tv_first_approx} is the posterior distribution of $(\theta,T_n)\,|\,\sdp$ and the right side uses our approximation of $T_n\,|\,\sdp$, but then still uses the ``non-private posterior'' $\pi_n(\theta\,|\,T_n)$. This justifies the replacement of $p_n(T_n\,|\,\sdp)$ with $p_\infty^{\mathcal T_n}(T_n\,|\,\sdp)$ in our sampling algorithm discussed in Section \ref{s:setup}. In the next section, we offer theoretical justifications for using an approximation of $\pi_n(\theta\,|\,T_n)$ as well.%, such as by a Laplace approximation founded on the Bernstein von-Mises Theorem. 

\subsection{Approximation for $\theta\,|\,T_n$ }\label{s:BvM}

If we have access to an exact sampler for $\pi_n(\theta\,|\,T_n)$ (such as with conjugate priors), then the results of Theorem \ref{thm:TgivenS} and Corollary \ref{cor:tv_first_approx} may be sufficient. However, since we are already approximating $p_n(T_n\,|\,\sdp)$ we may be interested in approximating $\pi_n(\theta\,|\,T_n)$ as well. In this section, we show that if  $\pi_\infty(\theta\,|\,T_n)$ is a standard large sample approximation for $\pi_n(\theta\,|\,T_n)$, then $\pi_\infty(\theta\,|\,T_n)\,p_\infty^{\mathcal T_n}(T_n\,|\,\sdp)$ is a close approximation to $\pi_n(\theta,T_n|\sdp)$. For example, recall that by the Bernstein von-Mises Theorem \citep[Section 10.2]{VanDerVaart2000}, the posterior distribution for many models is approximately normal:
\[\pi_n(\theta\,|\,T_n) \approx N\left(\hat\theta(T_n), n^{-1}\mathcal{I}^{-1}(\hat\theta(T_n))\right),\]
where $\hat\theta(T_n)$ is any efficient estimator and $\mathcal{I}$ is the Fisher information for a single observation. Given this, we need one final assumption to justify the use of $\pi_\infty(\theta\,|\,T_n)$ in our approximation, which is similar to the Bernstein von-Mises Theorem  \citep[Section 10.2]{VanDerVaart2000}.

%$\pi_n(\theta\,|\,T_n)\,p_n(T_n\,|\,\sdp)$.  %For an approximation that incorporates the prior, we could use Laplace's approximation: 
%\[\pi_n(\theta\,|\,T_n) \approx N(\hat\theta,\Phi),\]
%where $\hat\theta=\arg\max_\theta \log\pi_n(\theta\,|\,T_n)$ is the maximum aposteriori estimator and the inverse covariance matrix is $\Phi^{-1}=-\nabla^2_\theta \log \pi_n(\theta\,|\,T_n)|_{\theta=\hat\theta}$, which is the negative Hessian of $\log\pi_n(\theta\,|\,T_n)$ evaluated at $\hat\theta$. 

%We justify this based on a Bernstein von-Mises-style approximation, which can be used in combination with the approximations for $p_n(T_n\,|\,\sdp)$ developed in the previous section. For this reason we give a final assumption which, being unrelated to conditions on $p$ and $m$, belongs to a standalone class.

\begin{assumpD}\label{a.bvm}
The density $\pi_\infty(\theta\,|\,T_n)$ satisfies $\mathrm{TV}\big(\pi_n(\theta\,|\,T_n),\pi_\infty(\theta\,|\,T_n)\big)\overset p\rightarrow 0$, with the probability being with respect to the frequentist distribution $p_n(T_n\,|\,\theta_0)$ at a fixed $\theta_0$. 
\end{assumpD}

%Many standard approximations can satisfy Assumption \ref{a.bvm}, such as the Bernstein von-Mises Theorem or Laplace approximations \citep[Section 10.2]{VanDerVaart2000}. %Indeed Assumption \ref{a.bvm}  allows us to replace the true posterior distribution of $\theta\,|\,T_n$ with a simpler approximation (usually a normal approximation as described earlier) which, aside from making sampling simpler, can also simplify the modeling and prior choices as the approximation may have a simpler form with fewer parameters. In this work, we assume that the non-private Bernstein von-Mises approximation satisfies Assumption \ref{a.bvm}. 

%In addition, for the next results we establish a modified notation for the assumptions regarding the prior predictive distribution $p$: the notation D$\cdot$' represents Assumption D$\cdot$ but adds the same requirements for the frequentist case where $p$ is replaced with the frequentist distribution $p(t\,|\,\theta_0)$. Explicit statements of these assumptions are found in the Appendix as ??? \todo{add assumptions to appendix.}
%\begin{thm}\label{thm:fullTV}

We can now deliver the final theoretical result that justifies the proposed approach. Note that that the distribution of $\sdp\,|\,\theta$ depends on $n$ (since $T_n$ depends on $n$).

\begin{restatable}{thm}{thmFullTV}\label{thm:fullTV}
    Under Assumptions  \ref{a.location_mechanism}-\ref{a.support_mechanism} and \ref{a.bounded_density}-\ref{a.bvm}, we have that 
    \[\mathrm{TV}\big(\pi_n(\theta\,|\,T_n)\,p_n(T_n\,|\,\sdp),\pi_\infty(\theta\,|\,T_n)\,p_\infty^{\mathcal T_n}(T_n\,|\,\sdp)\big)\overset p \rightarrow 0,\]
    where the probability is over the marginal distribution of $\sdp\,|\,\theta_0$ as $n\rightarrow \infty$. 
\end{restatable}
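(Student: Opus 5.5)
The plan is to split the total variation with the triangle inequality, inserting the intermediate distribution $\pi_n(\theta\,|\,T_n)\,p_\infty^{\mathcal T_n}(T_n\,|\,\sdp)$:
\begin{align*}
&\mathrm{TV}\big(\pi_n(\theta\,|\,T_n)\,p_n(T_n\,|\,\sdp),\pi_\infty(\theta\,|\,T_n)\,p_\infty^{\mathcal T_n}(T_n\,|\,\sdp)\big)\\
&\quad\le \mathrm{TV}\big(\pi_n(\theta\,|\,T_n)\,p_n(T_n\,|\,\sdp),\pi_n(\theta\,|\,T_n)\,p_\infty^{\mathcal T_n}(T_n\,|\,\sdp)\big)\\
&\qquad+\mathrm{TV}\big(\pi_n(\theta\,|\,T_n)\,p_\infty^{\mathcal T_n}(T_n\,|\,\sdp),\pi_\infty(\theta\,|\,T_n)\,p_\infty^{\mathcal T_n}(T_n\,|\,\sdp)\big).
\end{align*}
The first term tends to zero almost surely by Corollary \ref{cor:tv_first_approx}, applied to the sequence $\sdp^{(n)}$ generated from $\theta_0$. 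Almost sure convergence implies convergence in probability, so this term needs no further work.

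For the second term, both joint distributions share the marginal $p_\infty^{\mathcal T_n}(T_n\,|\,\sdp)$ in $T_n$. Hence it equals
\[
\int \mathrm{TV}\big(\pi_n(\cdot\,|\,y),\pi_\infty(\cdot\,|\,y)\big)\,p_\infty^{\mathcal T_n}(y\,|\,\sdp)\,d\lambda(y)\efdeq E_{p_\infty}\big[h_n(T_n)\big],
\]
where $h_n(y)=\mathrm{TV}(\pi_n(\cdot\,|\,y),\pi_\infty(\cdot\,|\,y))\in[0,1]$. Assumption \ref{a.bvm} says $h_n(T_n)\to 0$ in probability when $T_n\sim p_n(\cdot\,|\,\theta_0)$. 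The catch is that here $T_n$ is drawn from $p_\infty^{\mathcal T_n}(\cdot\,|\,\sdp)$, not from $p_n(\cdot\,|\,\theta_0)$. To transfer, I will compare $E_{p_\infty}[h_n]$ with $E_{p_n(\cdot|\sdp,\theta_0)}[h_n]$. Since $0\le h_n\le 1$, their difference is at most $\mathrm{TV}\big(p_\infty^{\mathcal T_n}(\cdot\,|\,\sdp),p_n(\cdot\,|\,\sdp,\theta_0)\big)$, which tends to zero almost surely by part 1 of Theorem \ref{thm:TgivenS}, the frequentist statement at fixed $\theta_0$.

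It then remains to show $E[h_n(T_n)\,|\,\sdp,\theta_0]\overset p\to 0$ over the marginal of $\sdp\,|\,\theta_0$. This is the main step, though it should be short. The random variable $g_n(\sdp)=E[h_n(T_n)\,|\,\sdp,\theta_0]$ is nonnegative, and by the tower property its expectation is $E[h_n(T_n)\,|\,\theta_0]$. That expectation tends to zero by bounded convergence, since $h_n(T_n)\to0$ in probability and $h_n\le 1$. Markov's inequality then gives $P(g_n(\sdp)>\delta\,|\,\theta_0)\le E[h_n(T_n)\,|\,\theta_0]/\delta\to0$. Combining the three pieces — Corollary \ref{cor:tv_first_approx}, the transfer via Theorem \ref{thm:TgivenS}, and Markov — gives the claimed convergence in probability.

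One subtlety needs checking. The joint total variation decomposes as an average of conditional total variations only because the $T_n$-marginal is common and the conditional densities are taken with respect to a common base measure $\nu$. This holds by construction. I also need measurability of $h_n$, which I will take as implicit in Assumption \ref{a.bvm}.
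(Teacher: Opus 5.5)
Your proposal is correct and is essentially the paper's argument. The paper inserts $p_n(T_n\,|\,\sdp,\theta_0)$ via a four-term triangle inequality, with three of the terms handled by Theorem~1 through data processing. Your two-term split plus the bound $|\int h_n\,(p-q)\,d\lambda|\le \mathrm{TV}(p,q)$ for $0\le h_n\le 1$ reproduces this, with one transfer term instead of two. Your tower-property, bounded-convergence and Markov step is exactly the paper's conditional-expectation lemma, applied to the Bernstein--von Mises total variation $h_n(T_n)$ with $T_n\sim p_n(\cdot\,|\,\sdp,\theta_0)$.
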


%Finally, similarly to the adaptation made with Corollary \ref{cor:TgivenS}, we deliver the same result when constraining the support of $T_n$ to $\mathcal{T}_n$.

\begin{comment}
\begin{restatable}{corollary}{corFullTV}\label{cor:fullTV}
    Under Assumptions \ref{a.bounded_density}'-\ref{a.bounded_moments}', \ref{a.location_mechanism},\ref{a.support_mechanism} and \ref{a.bvm}, we have that 
    \[\mathrm{TV}\big(\pi_n(\theta\,|\,T_n)\,p_n(T_n\,|\,\sdp),\pi_\infty(\theta\,|\,T_n)\,p_\infty^{\mathcal{T}_n}(T_n\,|\,\sdp)\big)\overset p \rightarrow 0,\]
    where the probability is over the marginal distribution of $T_n\,|\,\theta_0$ as $n\rightarrow \infty$.
\end{restatable}
\end{comment}

%Similarly to Remark \ref{rem:M4}, also in this case we could replace $p_\infty^{\mathcal{T}_n}(T_n\,|\,\sdp)$ with the ``projection'' ersion of the approximating distribution, this time under the frequentist distribution $p_n(T_n\,|\,\theta_0)$. Below is a brief proof sketch for these two last results.

\begin{proof}[Proof Sketch.]
We upper bound the total variation through four terms by the triangle inequality, such that each term only differs in one of the two densities. Three terms simplify  to either $\mathrm{TV}(p_n(T_n\,|\,\sdp),p_\infty^{\mathcal T_n}(T_n\,|\,\sdp))$ or  $\mathrm{TV}(p_n(T_n\,|\,\sdp,\theta_0),p_\infty^{\mathcal T_n}(T_n\,|\,\sdp))$ by the data processing inequality, which converge to zero almost surely by Theorem \ref{thm:TgivenS}. The remaining term can be expressed as $\EE\limits_{T_n\sim p_n(T_n\,|\,\sdp,\theta_0)} \mathrm{TV}\big(\pi_n(\theta\,|\,T_n),\pi_\infty(\theta\,|\,T_n)\big)$. Since the term inside the expectation converges to zero in probability with respect to $T_n \,|\, \theta_0$, by Lemma \ref{lem:condExp} in Appendix \ref{app:proofs}, the conditional expectation also converges to zero in probability.% (with respect to $\sdp|\theta_0$ as $n\rightarrow \infty$).
\end{proof}

Besides simplifying the sampling process, the approximation  in Theorem \ref{thm:fullTV} also simplifies the modeling step: as long as the prior satisfies Assumptions \ref{p.prior} and \ref{a.bvm}, there is no need to specify or incorporate the prior into our procedure. Thus, while the justification of \citet{guha2024causal} is \emph{only} valid for priors that result in flat prior predictive distributions, our analysis is valid for \emph{nearly all} prior distributions, only requiring a proper prior and a Bernstein von-Mises approximation.

\subsection{Sampling Algorithm}\label{s:sample}

Following the above results, we present the proposed sampling algorithm to obtain an approximate posterior sample from $\theta\,|\,\sdp$. More specifically, as per Assumption \ref{a.location_mechanism}, we assume that we receive a private output $\sdp$ that is a result of an additive privacy mechanism $\mathcal{M}$ (e.g. Laplace or Gaussian) which is based on a summary statistic $T_n$. The proposed sampling strategy is presented in Algorithm \ref{algo.sample} in which the statistic $T_n$ is sampled from the distribution $p_n$ (or $p_{\infty}$) and then we  sample $\theta_i\,|\,T^{(i)}_n$. By our above theory, the generated samples approximately follow the posterior distribution $\theta\,|\,\sdp$. More specifically, aside from $\sdp$ and $m$, for Algorithm \ref{algo.sample} we need to specify the number of samples to produce as well as our choice of distributions for $p_\infty^{\mathcal T_n}(T_n\,|\,\sdp)$ and $\pi^*(\theta\,|\, T_n)$ which, in the latter case, can either be $\pi_n(\theta\,|\, T_n)$ or the approximation $\pi_{\infty}(\theta\,|\, T_n)$. Note that the approximation for sampling the statistics $T^{(i)}_n$ is $p_{\infty}^{\mathcal{T}_n}(T_n\,|\,\sdp) \propto m(\sdp\,|\,T_n) \, I(T_n \in \mathcal{T}_n)$. Moreover, since the $\hat{\theta}_i$ are i.i.d., we can easily account for Monte Carlo errors.

%However, for versions of our proposed algorithm, we need to make an additional assumption.

%\begin{assumpA}\label{a.theta_hat}
%There exists an efficient estimator $\hat\theta$ for $\theta$ which only depends on $T_n$
%\end{assumpA}

%For the settings we consider, this is not considered a strong assumption. For example, this assumption holds in the case where the data comes from an exponential family distribution. Based on this, t

\vspace{1em}

\begin{algorithm}
\footnotesize % Smaller font to reduce vertical space
\caption{Private Uncertainty Measurement via Bayesian Asymptotics (PUMBA)}
\begin{algorithmic}[1]
\setlength{\itemsep}{1pt} % Reduce space between lines
\REQUIRE Private output $\sdp$; Privacy mechanism $M$; Distributions $p_{\infty}(T_n\,|\,\sdp)$ and $\pi^*(\theta\,|\,T_n)$; Number of Monte Carlo samples $R$; Support $\mathcal{T}_n$; Sample size $n$.
\ENSURE Approximate posterior samples $\theta_1, \dots, \theta_R$.
\STATE Sample $T_n^{(1)},\ldots, T_n^{(R)} \overset{\text{iid}}{\sim} p_\infty^{\mathcal T_n}(T_n\,|\,\sdp)$ 
\FOR{each $T_n^{(i)}$}
    \STATE $\theta_i \sim \pi^*(\theta\,|\,T_n^{(i)})$.
%    \STATE \textit{Optional}: Compute additional statistics $J(\theta_i)$ 
\ENDFOR
\end{algorithmic}
\label{algo.sample}
\end{algorithm}

\begin{remark}
\label{rmk:alt_algo}
It is possible to consider alternative sampling strategies according to the analysis requirements. For example, in the case that the private posterior distribution is approximately normal and we have access to an efficient estimator $\hat\theta(T_n)$ for $\theta$, instead of sampling $\theta_i\sim \pi^*(\theta\,|\,T_n^{(i)})$ we can directly estimate the posterior mean and covariance which can then be used to produce credible regions. This alternative procedure is presented in Algorithm \ref{algo.totcov}, found in Appendix \ref{s.totcov_approx}.
\end{remark}

\begin{remark}
    Implementation of step 1 in Algorithm \ref{algo.sample} should be straightforward. When using additive distributions such as Laplace or Gaussian noise, then $p_\infty^{\mathcal T_n}$ amounts to sampling a truncated Laplace/Gaussian distribution, which can be accomplished by rejection sampling, or  approximate samplers such as MCMC, slice samplers or particle filters. 
\end{remark}

\section{Simulation Studies}
\label{s:numerical_exp}
In this section, we compare the performance of PUMBA in two controlled simulations studies: one for the mean of bounded data and the other for inference on the slope in a simple linear regression model (an additional more complex emulation scenario can be found in Section \ref{s.home_sim}). In both cases, we highlight that PUMBA has significantly more lenient modeling assumptions compared to previous solutions. 

\subsection{Confidence Interval for the Mean of Bounded Data}\label{s:mean}

In this section, we tackle a relatively simple problem that, to date, has not received a satisfactory private solution. More specifically, given $t_1,\ldots, t_n$, which are i.i.d.  random variables lying in $[0,1]$, how do we produce a DP confidence interval for the mean? Call $\mu = \EE t_i$ and $\sigma^2 = \var(t_i)$. The statistics we use are noisy first and second moments, which is a similar approach to that used in \citet{du2020differentially}, \citet{wang2025differentially} and \citet{awan2025simulation} where, however, they use a noisy sample variance in place of a noisy second moment. Nevertheless, these approaches have some limitations with respect to the postulated scenario: (i) \citet{karwa2018finite,du2020differentially} and \citet{awan2025simulation} assume the normal parametric family which is not valid in our general scenario, while (ii) \citet{wang2025differentially} can handle our non-Gaussian setting, but they only deliver an \textit{asymptotic GDP guarantee} and require very large sample sizes (their simulations start at $n=10,000$), making it inapplicable to the simulation setting considered here. 

More in detail, we choose our vector of DP statistics $\sdp$ to be composed by $s_1 = \ol t+\mathrm{Laplace}(0,\nicefrac{1}{(n\,\ep_1)})$ and $s_2 = \frac1n \sum_{i=1}^n t_i^2+\mathrm{Laplace}(0,\nicefrac{1}{(n\,\ep_2)})$, which jointly satisfy $(\ep_1+\ep_2)$-DP. Also, we have that $\theta = [\mu, \sigma^2]$ and $T = [\,\ol t, \frac 1n \sum_{i=1}^n t_i^2]$.  We implement our approximate Bayesian method using Algorithm \ref{algo.sample} with a rejection sampler, using the Gaussian approximation $\pi_n(\theta\,|\,T_n) \sim N(\ol t,s_t^2/n)$, where $s_t^2$ is the sample variance (which can be computed from $T$) and the the constraint set $\mathcal{T}_n$ is as follows:
\begin{equation}
\label{eq:moments}
0\leq \left(\frac 1n \sum_{i=1}^n t_i\right)^2\leq \frac 1n \sum_{i=1}^n t_i^2\leq \frac 1n \sum_{i=1}^n t_i \leq 1,
\end{equation}
where the second inequality is by Jensen and the other inequalities use the fact that $t_i \in [0,1]$. We compare against an oracle benchmark derived in \citet{wang2018statistical}, which uses a frequentist approximating distribution for $s_1$, assuming that the variance of $t_i$ is known (the framework of \citet{wang2018statistical} cannot handle the case of unknown variance), as well as a plug-in version  of \citet{wang2018statistical} that replaces the population variance with a DP estimator.

We run a simulation to study how these different methods perform in terms of coverage and width of their delivered confidence intervals (or credible intervals for our approach). We generate samples $t_i$ ($i = 1, \hdots, n$) of size $n = 1,000$ from the distribution $\mathrm{Beta}(2,4)$, which has mean $\mu = \nicefrac{1}{3}$ and variance $\sigma^2 = \nicefrac{2}{63} \approx .0317$. Recalling that we use the additive Laplace mechanism with $\epsilon_1 = 0.95$ and $\epsilon_2 = 0.05$ for the first and second estimated moments respectively, for \citet{wang2018statistical} the sampling distribution of $s_1$ is approximated as the convolution $N(\mu,\nicefrac{\sigma^2}{n})+\mathrm{Laplace}(0,\nicefrac{1}{(n\ep)})$, while the ``Plug-In'' method uses the same approximation as \citet{wang2018statistical} but replaces the true value $\sigma^2$ with the plug-in estimator $\hat{\sigma}^2 = \max\{s_2-s_1^2,0\}$. For our approach, we set the number of Monte Carlo samples to $R=1,000$ and use the approximations $p_\infty^{\mathcal{T}_n}(T_n\,|\,\sdp)$ and $\pi_\infty(\theta\,|\,T_n)$. The results of this simulation can be found in Table \ref{tab:mean}.

% \begin{table}[t]
%     \centering
%     \renewcommand{\arraystretch}{1.2}
%     \setlength{\tabcolsep}{10pt}
%     \begin{tabular}{lcc}
%         \toprule
%         \textbf{Method} & \textbf{Coverage} & \textbf{Width} \\
%         \midrule
%         \citet{wang2018statistical} (Known Variance) & 0.946 (0.007) & 0.0228 ($<0.0001$) \\
%         Plug-In & 0.867 (0.011) & 0.0216 (0.0003) \\
%         PUMBA & 0.969 (0.005) & 0.0260 (0.0002) \\
%         \bottomrule
%     \end{tabular}
%     \caption{Coverage and width of DP credible/confidence intervals for the mean of $[0,1]$ random variables, with Monte Carlo standard errors in parentheses. Simulation settings: $n=1,000$, $\epsilon_1=0.95$, $\epsilon_2=0.05$, $R=1,000$ (number of Monte Carlo samples), target coverage 0.95, results aggregated over 1,000 replicates.}
%     \label{tab:mean}
% \end{table}

It can be seen how, in this experiment, the oracle method of \citet{wang2018statistical} achieves accurate coverage and has a baseline width of $0.023$. On the other hand, replacing the true variance with the plug-in estimator results in significantly lower coverage of $0.866$, along with slightly smaller width. PUMBA achieves a good coverage of $0.952$ and a slightly larger width of $0.026$, providing well-calibrated statistical inference without requiring the true variance to achieve this. 

%In Section \ref{s:mean2} of the Appendix, we consider a slightly different scenario where the sample mean and sample variance are privatized. While our method still gives good performance, this setting does not technically align with our theoretical assumptions since the sample variance cannot be expressed as a sum of independent random variables. The real-data example in the next section also provides some additional simulation results (based on the true data) that highlight the validity of our proposed approach. \todo{Do we have the simulation with mean and variance privatized?}

% \begin{table}[t]
%     \centering
%     \begin{tabular}{l|ll}
%         Method & Coverage & Width \\\hline
%         \citet{wang2018statistical} (Known Variance)&0.946 (0.007) & 0.0228 ($<0.0001$)\\
%         Plug In & 0.867 (0.011) & 0.0216 ($0.0003$)\\
%         Ours (Rejection Sampling) & 0.969 (0.005) & 0.0260 ($0.0002$)
%     \end{tabular}
%     \caption{Coverage and width of DP credible/confidence intervals for the mean of $[0,1]$ random variables, with Monte Carlo standard errors in parentheses. See Section \ref{s:mean} for setting details. Simulation settings are $n=1000$, $\epsilon_1=.95$, $\epsilon_2=.05$, $R=1000$ (number of Monte Carlo samples), target coverage .95, and results are aggregated over 1000 replicates.}
%     \label{tab:mean}
% \end{table}

%\subsection{Pearson Correlation}\label{s:correlation}

\begin{table}[t]
    \centering
    \caption{Coverage and width of DP credible/confidence intervals for the mean of $[0,1]$ random variables, with Monte Carlo standard errors in parentheses. Simulation settings: $n=1{,}000$, $\epsilon_1=0.95$, $\epsilon_2=0.05$, $R=1{,}000$ (number of Monte Carlo samples), target coverage 0.95, results aggregated over 1{,}000 replicates.}
    \footnotesize % Reduce font size
    \renewcommand{\arraystretch}{1.0} % Tighter row height
    \setlength{\tabcolsep}{6pt} % Tighter column spacing
    \begin{tabular}{lll}
        \toprule
        \textbf{Method} & \textbf{Coverage} & \textbf{Width} \\
        \midrule
        \citet{wang2018statistical} (Known Variance) & 0.942 (0.007) & 0.023 ($0.0001$) \\
        Plug-In & 0.866 (0.011) & 0.022 (0.0003) \\
        PUMBA  & 0.952 (0.007) & 0.026 (0.0002) \\
        \bottomrule
    \end{tabular}
    \label{tab:mean}
\end{table}

\subsection{Simple Linear Regression by Sufficient Statistic Perturbation}\label{s:linearregression}

A common statistical task is to identify the dependence between two variables through linear regression. Several prior works have proposed methods for DP linear regression, however these each require at least one of the following: (i) a fully parametric model, (ii) specific priors (such as conjugate priors), (iii) significantly large sample sizes for asymptotic approximations to be accurate, and/or (iv) significant computational time/resources.  

A common approach to DP simple linear regression is through sufficient statistic perturbation. Assuming that the predictors and responses are bounded a priori (such as by prior domain knowledge), the sufficient statistics, which consist of the first, second, and mixed moments of the variables, have finite sensitivity. Given this setting, let $x_1,\ldots, x_n$ be an i.i.d. explanatory variable from a continuous distribution, and $y_i = \beta_0+\beta_1x_i+e_i$ be the response, where $e_i$ are i.i.d. with mean zero and finite variance $\sigma^2$. We assume that the data have been normalized (using prior knowledge) such that $x_i,y_i \in [0,1]$ for all $i=1,\ldots, n$. From the summary statistics, $\sum_{i=1}^n x_i$, $\sum_{i=1}^n x_i^2$, $\sum_{i=1}^n y_i$, $\sum_{i=1}^n y_i^2$, and $\sum_{i=1}^n x_iy_i$, the three parameters $\beta_0$ and $\beta_1$ can be estimated by least squares which is the best linear unbiased estimator (and asymptotically normal), whose distribution (conditional on the $x_i$'s) depends on $\sum_{i=1}^n x_i$, $\sum_{i=1}^n x_i^2$, and $\sigma^2$. Note that each of these statistics has sensitivity 1, so adding independent noise distributed as $N(0,5/\ep^2)$ satisfies $\ep$-GDP. Similar DP statistics have been used by \citet{bernstein2019differentially,awan2021structure,alabi2022hypothesis}. We propose using PUMBA to impute the summary statistics, from which we can estimate the three parameters, and then use the asymptotic distribution of $\hat\beta$ (with a plug-in unbiased estimator of $\sigma^2$) to approximate the variance. The constraints imposed on PUMBA consist of moment inequalities of the form in \eqref{eq:moments} for both the $x_i$'s and the $y_i$'s, as well as positive-definiteness for the ``$X^\top X$'' matrix (where $X$ includes a vector of ones and the explanatory variable), and non-negativity for the standard estimate of $\sigma^2$.

%\citet{awan2021structure} plugged these noisy sufficient statistics into the standard least squares estimator. \citet{bernstein2019differentially} showed that an approximate Bayesian inference procedure offered superior performance, but their Gibbs sampler (i) assumed a Gaussian model for the responses and (ii) used a Gaussian approximation for the sufficient statistics, which is not formally justified (unlike Theorem \ref{thm:fullTV}). 

 In Table \ref{tab:LR_sim}, we compare PUMBA with (i) a plug-in estimator (Naive), which treats noisy summary statistics as if they were original (as in \citet{awan2021structure}), and (ii) Bayesian inference through a mis-specified Gaussian model, via data augmentation MCMC (DA MCMC) from \citet{ju2022data}(see implementation details in Supplementary Materials, Section \ref{s:linDetails}). We consider $n\in\{30,100,300,1000\}$ with $\ep=1$ under the model $x_i\sim \mathrm{Unif}(0,1)$, $\beta_0=0.25$, $\beta_1=0.5$, and $e_i\iid (\mathrm{Beta}(4,4)-1/2)/2$ (mean zero, standard deviation $\sigma=1/12$). Results are averaged over 1,000 replicates. For PUMBA, we draw $R=1{,}000$ Monte Carlo samples per replicate. The DA MCMC uses a chain of length 10,000 with 5,000 burn-in iterations; effective sample size (ESS) is computed on the remaining samples, while runtime reflects the full chain (so ESS/sec would be roughly doubled without discarding the burn-in samples).

PUMBA exhibits slightly conservative but well-calibrated coverage that approaches the nominal level as $n$ increases, whereas the Naive method shows substantial under-coverage. In terms of RMSE, PUMBA consistently outperforms the plug-in estimator, with particularly large gains at small sample sizes; as $n$ grows, both methods converge due to their shared asymptotic distribution. Compared to DA MCMC under a misspecified Gaussian model, PUMBA maintains stable coverage as $n$ increases (while DA MCMC drops to 0.263 at $n=1{,}000$). Although DA MCMC achieves slightly smaller width and RMSE at $n=30$, PUMBA attains about $40\%$ lower RMSE at $n=1{,}000$. Moreover, PUMBA delivers substantial computational gains, achieving roughly $11$-times higher effective sampling rates at $n=30$ and up to $3{,}808$-times at $n=1{,}000$\footnote{Computations were run through interactive sessions on a cluster using 128 cores. The parallelization was run between replicates, so each ESS/sec relates to a single core. The processor was Amd Epyc 9575F.}. The latter increase in PUMBA's computational performance with larger $n$ is mainly due to the quality of the samples: for small $n$ the DP noise dominates leading to reject many samples that do not deliver statistics which respect the constraints, with this rejection rate therefore diminishing with larger $n$.

\begin{remark}
    While the coverage of the naive method could be potentially improved by incorporating the privacy noise into the sampling distribution, we would still expect a larger width for the naive CIs since the RMSE of the naive estimator is higher than that of the PUMBA posterior mean (when coverage is equated).
\end{remark}

\begin{remark}
    Note that the data augmentation MCMC uses a mis-specified Gaussian model and conjugate priors. Performance could be improved by using a better model, but we emphasize that (i) it may be challenging to develop a fully parametric model along with an appropriate prior distribution and (ii) the performance of the data augmentation MCMC suffers when using non-conjugate priors \citep[see][]{chen2025particle}.
\end{remark}

% \begin{table}[t]
% \centering
% \footnotesize
% \renewcommand{\arraystretch}{1.05}
% \setlength{\tabcolsep}{5pt}
% \caption{Simple linear regression inference for the slope. The number of simulations is $B=1{,}000$ and the nominal coverage is $0.95$.}
% \label{tab:LR_sim_combined}
% \begin{tabular}{c|ccc|cccc|cccc}
% \toprule
%  & \multicolumn{3}{c|}{\textbf{Naive}} 
%  & \multicolumn{4}{c|}{\textbf{DA MCMC}} 
%  & \multicolumn{4}{c}{\textbf{PUMBA}} \\
% \cmidrule(r){2-4} \cmidrule(r){5-8} \cmidrule(l){9-12}
% $\mathbf{n}$ 
% & Cov. & Width & RMSE 
% & Cov. & Width & RMSE & ESS/sec
% & Cov. & Width & RMSE & Samples/sec \\
% \midrule
% 30 
% & 0.095 (0.009) & 0.137 (0.010) & 59.642 (1.886)
% & 0.994 (0.002) & 1.164 (0.004) & 0.315 (0.010) & 71.934 (0.231)
% & 0.991 (0.003) & 2.515 (0.026) & 0.375 (0.012) & \phantom{x}843.251 (21.636) \\

% 100 
% & 0.154 (0.011) & 0.107 (0.004) & \phantom{x}0.881 (0.028)
% & 0.868 (0.011) & 0.653 (0.002) & 0.251 (0.008) & 16.350 (0.044)
% & 0.962 (0.006) & 1.285 (0.017) & 0.248 (0.008) & 2,538.516 (48.919) \\

% 300 
% & 0.218 (0.013) & 0.057 (0.001) & \phantom{x}0.112 (0.004)
% & 0.515 (0.016) & 0.330 (0.001) & 0.171 (0.005) & \phantom{x}4.483 (0.013)
% & 0.956 (0.006) & 0.413 (0.002) & 0.102 (0.003) & 3,833.302 (50.056) \\

% 1,000 
% & 0.391 (0.015) & 0.035 ($<0.001$) & \phantom{x}0.035 (0.001)
% & 0.263 (0.014) & 0.125 ($<0.001$) & 0.082 (0.003) & \phantom{x}1.414 (0.004)
% & 0.959 (0.006) & 0.131 ($<0.001$) & 0.033 (0.001) & 5,385.495 (23.492) \\
% \bottomrule
% \end{tabular}
% \end{table}

\begin{table}[t]
\centering
\footnotesize
\renewcommand{\arraystretch}{1.05}
\setlength{\tabcolsep}{6pt}
\caption{Coverage, width of DP credible/confidence intervals, RMSE and ESS/sec for simple linear regression inference for the slope, with Monte Carlo standard errors in parentheses. Simulation settings: $\epsilon=1$, $R=1{,}000$ (number of Monte Carlo samples for PUMBA), target coverage 0.95, results aggregated over 1{,}000 replicates.}
\label{tab:LR_sim}
\begin{tabular}{c c c c c c}
\toprule
\textbf{Method} & $\mathbf{n}$ & \textbf{Coverage} & \textbf{Width} & \textbf{RMSE} & \textbf{ESS/sec} \\
\midrule

\multirow{4}{*}{Plugin}
& 30   & 0.095 (0.009) & 0.137 $\phantom{<}$(0.010) & 59.642 (1.886) & -- \\
& 100  & 0.154 (0.011) & 0.107 $\phantom{<}$(0.004) & \phantom{x}0.881 (0.028) & -- \\
& 300  & 0.218 (0.013) & 0.057 $\phantom{<}$(0.001) & \phantom{x}0.112 (0.004) & -- \\
& 1,000& 0.391 (0.015) & 0.035 ($<0.001$) & \phantom{x}0.035 (0.001) & -- \\

\midrule

\multirow{4}{*}{DA MCMC}
& 30   & 0.994 (0.002) & 1.164 $\phantom{<}$(0.004) & 0.315 (0.010) & \phantom{x}71.934 (0.231) \\
& 100  & 0.868 (0.011) & 0.653 $\phantom{<}$(0.002) & 0.251 (0.008) & \phantom{x}16.350 (0.044) \\
& 300  & 0.515 (0.016) & 0.330 $\phantom{<}$(0.001) & 0.171 (0.005) & \phantom{xx}4.483 (0.013) \\
& 1,000& 0.263 (0.014) & 0.125 ($<0.001$) & 0.082 (0.003) & \phantom{xx}1.414 (0.004) \\

\midrule

\multirow{4}{*}{PUMBA}
& 30   & 0.991 (0.003) & 2.515 $\phantom{<}$(0.026) & 0.375 (0.012) & \phantom{x}843.251 (21.636) \\
& 100  & 0.962 (0.006) & 1.285 $\phantom{<}$(0.017) & 0.248 (0.008) & 2,538.516 (48.919) \\
& 300  & 0.956 (0.006) & 0.413 $\phantom{<}$(0.002) & 0.102 (0.003) & 3,833.302 (50.056) \\
& 1,000& 0.959 (0.006) & 0.131 ($<0.001$) & 0.033 (0.001) & 5,385.495 (23.492) \\

\bottomrule
\end{tabular}

\end{table}

\section{Application: Drivers of Homeownership}
\label{s:homeownership}

To highlight the effectiveness of this approach, we apply it to a regression task of interest in Political Science. Specifically, we study the drivers of homeownership across U.S. counties, focusing on the role of ethnic composition and key socio-economic factors such as unemployment, poverty, health insurance coverage, house cost and population density. These variables are commonly used to understand barriers to homeownership and inform policy decisions. While these variables are important, many are sensitive, motivating the use of differential privacy. The American Community Survey (ACS), conducted annually by the U.S. Census Bureau, provides detailed demographic, social and economic data across geographic levels resulting in $3,222$ counties and a total population over $334$ million individuals. Using ACS 5-year estimates, we fit a standard linear model to study homeownership since this has been a widely employed approach for analyzing ACS data in general \citep{stone2015place, frankenfeld2019county, allaire2018national, mueller2021widespread, parilla2018examining, walker2011variegated}. We refer the reader to Section \ref{s:lit_rev_home} in Supplementary Materials for additional literature review on this topic. For this particular study we consider the following linear model:
\begin{align}
        \mathbb{E}[\text{Home Ownership}] =& \,\beta_0 + \sum_{i=1}^{4} \beta_i \, \text{Ethnicity}_i + \beta_5 \, \text{Unemployed} + \beta_6 \, \text{Poverty} + \nonumber\\
        & \beta_7 \, \text{No Insurance} + \beta_8 \, \text{Housecost} + \beta_9 \, \text{Pop. Density} ,
        \label{eq.homeownership}
\end{align}
where $\mathbb{E}[\text{Home Ownership}]$ is the expected proportion of homeowners in a county. For the purpose of this example, we consider five general ethnic backgrounds of interest as potential drivers: ``White'', ``Black'', ``Asian'', ``Indigenous'' (which includes American Indian and Alaska Native plus Native Hawaiian and Other Pacific Islander) and ``Other''. We use the ``White'' group as a reference level included in the intercept.

This setting provides a useful testbed for PUMBA: since ACS data are not currently privatized, we can simulate differential privacy mechanisms and compare results obtained from privatized data to those from the original data, allowing us to assess the reliability of our approach. For this reason, we run the Gaussian additive mechanism to the county statistics $T_k$ to guarantee DP and select the privacy budget $\epsilon = 0.25$. The overall sensitivity for each county is $\Delta = \sqrt{8}$, since the ethnicity variable can be considered a histogram with sensitivity $\sqrt{2}$, implying that the privacy budget can be rescaled to $\epsilon^* = \nicefrac{\epsilon}{\sqrt{8}}$.

% \textcolor{red}{We want to fit equation linear regression based on these DP summaries. Previous solutions are not satisf. or inapplicable: (i) sample size (more than 334 million); (ii) require fully parametric models for all variables }

As stated earlier, our goal is now to perform linear regression (and inference) using the DP summaries described above. Unfortunately, as mentioned in the introduction to this work, there are no currently available methods to perform statistical inference on this type of problem, i.e. (i) a large sample size collecting more than 334 million individuals and (ii) a dimension and variety of considered variables for which it would be complicated to assume a strong parametric form. Therefore, we study how the proposed method (PUMBA) compares to the original non-private regression (Non-DP) and to a naive approach (Naive) which simply runs the regression on the privatized data and performs inference without considering the noise added for privacy. In particular, since we are using the Gaussian additive mechanism, we can more reasonably assume an approximately normal posterior $\theta\,|\, T_n$ and therefore use the alternative sampling algorithm discussed in Remark \ref{rmk:alt_algo} and detailed in Algorithm 2 in the Supplementary Materials. 

To study these approaches, we firstly run an emulation study (Section \ref{s.home_sim}) inspired from the ACS data to evaluate the performance of these methods in a controlled setting (similar in spirit to the numerical experiments in Section \ref{s:numerical_exp}), and then we study how its conclusions compare to the other approaches when actually applied to the ACS data (Section \ref{sec.acs_analysis}). In the latter analysis we use two sampling techniques for the count variables: (i) sample from their exact discrete support and (ii) sample from a continuous support (as a form of approximation). Noting no difference in results between these approaches for this data, we report those based on the exact discrete sampling for the original ACS data analysis, while in the emulation study we sample from a continuous support to speed-up computations.

\subsection{Homeownership Emulation Study}
\label{s.home_sim}

For this first study we estimate the regression model in \eqref{eq.homeownership} on the original non-private ACS data whose estimated coefficients are presented in the first column of Table \ref{tab:regression_results} further on. We take these estimated coefficients as the true coefficients to be used in the following simulation settings:
\begin{enumerate}
    \item $H_0$: simulation under the null hypothesis where $\beta_i = 0$, for all $i = 1, \hdots, p$, i.e. only the intercept $\beta_0$ is significant.
    \item $H_A$: simulation under the alternative hypothesis where all $\beta_i \neq 0$, for all $i = 0, \hdots, p$, i.e. all variables have a significant impact on the response.
\end{enumerate}
We simulate from the linear model $B = 1,000$ times and, in each case, we compute confidence/credible intervals for the three considered methods. For the PUMBA we use $R = 1,000$ (number of Monte Carlo samples). While we expect the Non-DP to attain the nominal coverage, we expect the Naive method to slightly undercover while we would want PUMBA to at least attain the nominal coverage. %Indeed, in general, we would prefer to have conservative intervals rather than have a large type-I error. 
Table \ref{tab:coverage_results} reports the coverage of $1 - \alpha$ confidence/credible intervals (with $\alpha = 0.05$) for all three considered methods.

\begin{table}[t]
    \centering
    \scriptsize
    \sisetup{scientific-notation=true, round-mode=figures, round-precision=3}
    \renewcommand{\arraystretch}{1.05}
    \setlength{\tabcolsep}{4pt}
    \caption{Coverage probabilities and average interval widths for Non-DP, Naive, and PUMBA under the null ($H_0$) and alternative hypotheses ($H_A$). The number of simulations is $B = 1{,}000$ and the nominal coverage is $1 - \alpha = 0.95$.}
    \label{tab:coverage_results}
 \resizebox{\textwidth}{!}{%
    \begin{tabular}{l
        cc cc cc 
        cc cc cc}
        \toprule
         & \multicolumn{6}{c}{$H_0$} & \multicolumn{6}{c}{$H_A$} \\
        \cmidrule(lr){2-7} \cmidrule(lr){8-13}
         & \multicolumn{2}{c}{Non-DP} 
         & \multicolumn{2}{c}{Naive} 
         & \multicolumn{2}{c}{PUMBA} 
         & \multicolumn{2}{c}{Non-DP} 
         & \multicolumn{2}{c}{Naive} 
         & \multicolumn{2}{c}{PUMBA} \\
         & Cov. & Width & Cov. & Width & Cov. & Width 
         & Cov. & Width & Cov. & Width & Cov. & Width \\
        \midrule
        Intercept    & 0.959 & \num{1.444e-3} & 0.565 & \num{3.494e-3} & 1.000 & \num{1.763e-2} & 0.959 & \num{1.444e-3} & 0.262 & \num{4.058e-3} & 0.999 & \num{6.320e-2} \\
        Black        & 0.944 & \num{2.387e-3} & 0.817 & \num{5.868e-3} & 1.000 & \num{1.321e-2} & 0.944 & \num{2.387e-3} & 0.682 & \num{6.884e-3} & 1.000 & \num{2.445e-2} \\
        Asian        & 0.958 & \num{1.138e-2} & 0.625 & \num{2.788e-2} & 0.999 & \num{1.021e-1} & 0.958 & \num{1.138e-2} & 0.443 & \num{3.318e-2} & 1.000 & \num{1.463e-1} \\
        Indigenous   & 0.965 & \num{4.473e-3} & 0.740 & \num{1.100e-2} & 1.000 & \num{2.717e-2} & 0.965 & \num{4.473e-3} & 0.712 & \num{1.286e-2} & 1.000 & \num{3.557e-2} \\
        Other        & 0.942 & \num{3.656e-3} & 0.710 & \num{9.001e-3} & 1.000 & \num{2.356e-2} & 0.942 & \num{3.656e-3} & 0.662 & \num{1.056e-2} & 1.000 & \num{4.480e-2} \\
        Unemployed   & 0.948 & \num{3.267e-2} & 0.473 & \num{7.817e-2} & 0.998 & \num{4.467e-1} & 0.948 & \num{3.267e-2} & 0.445 & \num{9.151e-2} & 1.000 & \num{5.370e-1} \\
        Poverty      & 0.948 & \num{8.785e-3} & 0.609 & \num{2.148e-2} & 0.999 & \num{6.669e-2} & 0.948 & \num{8.785e-3} & 0.587 & \num{2.512e-2} & 1.000 & \num{9.187e-2} \\
        No Insurance & 0.947 & \num{6.804e-3} & 0.749 & \num{1.673e-2} & 1.000 & \num{4.251e-2} & 0.947 & \num{6.804e-3} & 0.727 & \num{1.958e-2} & 1.000 & \num{6.571e-2} \\
        Housecost    & 0.955 & \num{4.660e-4} & 0.475 & \num{1.101e-3} & 1.000 & \num{6.457e-3} & 0.955 & \num{4.660e-4} & 0.203 & \num{1.267e-3} & 0.998 & \num{2.574e-2} \\
        Pop. Density & 0.958 & \num{1.722e-7} & 0.965 & \num{4.247e-7} & 0.999 & \num{7.335e-7} & 0.958 & \num{1.722e-7} & 0.865 & \num{7.187e-7} & 1.000 & \num{9.110e-7} \\
        \bottomrule
    \end{tabular}
    }
\end{table}
We can observe that the method in absence of DP noise (i.e. Non-DP) does indeed have the expected levels of coverage. However, if one were to ignore the noise introduced to guarantee DP (i.e. Naive), then the true coefficients are severely undercovered in all settings (with the exception of the last coefficient related to population density). On the other hand, the proposed PUMBA is significantly conservative for all coefficients, therefore guaranteeing a low type-I error. The latter of course comes at the cost of statistical power: while the Non-DP has 100\% power for the considered sample size, the Naive and the PUMBA also are generally close to 100\% power for all coefficients except for ``Unemployed'' where the power however is much lower for the PUMBA compared to the Naive. Having observed the behavior of the three methods under controlled settings, the next section discusses how these three differ when directly analyzing the real data.

\subsection{ACS Study on Homeownership}
\label{sec.acs_analysis}

The results presented above support the validity of PUMBA in performing adequate inference when data is privatized, as done for example by the U.S Census through DP mechanisms. Having confirmed this, we now study how the considered methods align or differ when directly using the ACS data to understand the drivers of homeownership considered in the model in \eqref{eq.homeownership}. Ideally, we would like the proposed method to align with the Non-DP approach (i.e. standard regression on original non-privatized data) while we expect that the conclusions of the Naive approach (i.e. standard regression on privatized data) to potentially deliver different conclusions since it does not account for the privacy mechanism $\mathcal{M}$. The results of the analysis using these three approaches are  in Table \ref{tab:regression_results}.

\begin{table}[t]
    \centering
    \footnotesize % Smaller font to reduce width
    \renewcommand{\arraystretch}{1.0} % Reduce row height
    \setlength{\tabcolsep}{6pt} % Reduce horizontal padding between columns
    \begin{threeparttable}
    \caption{Coefficient Estimates and corresponding p-values (in parentheses) for Non-DP approach (first column); Naive approach (second column) and PUMBA (third column). Significance levels: * $< 0.05$, ** $< 0.01$, *** $< 0.001$.}
    \label{tab:regression_results}
    \begin{tabular}{l
        S[table-format=1.3]@{\,}c
        S[table-format=1.3]@{\,}c
        S[table-format=1.3]@{\,}c}
        \toprule
         & \multicolumn{2}{c}{Non-DP} & \multicolumn{2}{c}{Naive} & \multicolumn{2}{c}{PUMBA} \\
        & {Estimate} & {(p-value)} & {Estimate} & {(p-value)} & {Estimate} & {(p-value)} \\
        \midrule
        Intercept  & 0.738 & ($<$0.001) *** & 0.743 & ($<$0.001) *** & 0.753 & ($<$0.001) *** \\
        Black      & -0.120 & ($<$0.001) *** & -0.121 & ($<$0.001) *** & -0.126 & ($<$0.001) *** \\
        Asian      & -1.030 & ($<$0.001) *** & -1.026 & ($<$0.001) *** & -1.000 & ($<$0.001) *** \\
        Indigenous & -0.108 & ($<$0.001) *** & -0.104 & ($<$0.001) *** & -0.109 & ($<$0.001) *** \\
        Other      & -0.115 & ($<$0.001) *** & -0.111 & ($<$0.001) *** & -0.107 & ($<$0.001) *** \\
        Unemployed & -0.095 & (0.458) & -0.303 & (0.013) * & -0.282 & (0.158) \\
        Poverty    & -0.402 & ($<$0.001) *** & -0.371 & ($<$0.001) *** & -0.384 & ($<$0.001) *** \\
        No Insurance & -0.040 & (0.136) & -0.048 & (0.071) . & -0.047 & (0.11) \\
        Housecost  & 0.030 & ($<$0.001) *** & 0.029 & ($<$0.001) *** & 0.025 & (0.008) ** \\
        Pop. Density & \num{-8.17e-6} & ($<$0.001) *** & \num{-8.13e-06} & ($<$0.001) *** & \num{-8.21e-06} & ($<$0.001) *** \\
        \bottomrule
    \end{tabular}
    \end{threeparttable}
\end{table}

Overall, the three approaches agree on the direction and size of the effects of all variables on the proportion of homeowners within each county. All considered drivers appear to have a negative impact implying that, intuitively, higher unemployment and poverty levels decrease the proportion of homeowners in a county, as do a lower proportion of insured individuals and a higher population density. A similar conclusion can be made for the different ethnic groups when compared to the reference group of ``White''. The only exception is the housecost which, counterintuitively, appears to increase the proportion of homeowners: this however can be considered as a proxy for income levels, thereby suggesting the higher income counties see an increased proportion of homeowners. Putting aside the interpretation of the outputs, a major difference between the considered approaches can be seen however in the effect of the unemployment rate which sees a much greater size for the Naive and PUMBA approaches with respect to the Non-DP original results. Nevertheless, this driver is not statistically significant in the Non-DP analysis and the PUMBA also comes to the same conclusion. On the other hand, the Naive approach reports a significant impact of this variable on homeownership thereby bringing to different conclusions compared to the reference Non-DP method. This is also the case for the significance of insurance coverage on homeownership, where Non-DP and PUMBA agree on its non-significant effect while, although not at the typical $\alpha = 0.05$ level, the Naive approach detects a reasonable level of significance of this variable. Therefore, the proposed PUMBA delivers the same conclusions in terms of statistical significance as the reference non-private analysis highlighting how, unlike the Naive approach, it is able to appropriately account for privacy noise when performing statistical inference. While linear regression is a common tool to address these problems in Political Sciences, logistic regression (weighted by county population) is another reasonable model for this kind of problem. The results of this analysis (and corresponding simulation study as in Section \ref{s.home_sim}) can be found in Appendix \ref{s.acs_study_logistic} where all methods obtain similar estimates and inferential conclusions, with the Naive approach giving slight undercoverage while PUMBA still remains reliable with coverages close to the nominal level.

\section{Discussion}
\label{s:discussion}

PUMBA provides a relatively broad and computationally feasible option for researchers and practitioners who aim to perform inference based on privatized statistics (typically counts) reported in numerous data sources such as, for example, national surveys like the U.S. Census or related products. With strong theoretical guarantees under generally mild assumptions, PUMBA contributes to the avenue of research focused on reliable statistical inference under privacy constraints by also overcoming some computational bottlenecks common to various current solutions from the Bayesian framework. This is achieved by justifying and proving the validity of different posterior approximations for the quantities of interest when working with statistics computed on large amounts of observations. Indeed, these asymptotic approximations allow to deliver more efficient sampling schemes to obtain posterior samples. The validity of the theoretical results are confirmed through some simulation studies, including inference for the mean as well as linear and logistic regression. Moreover, an applied study of interest in the field of Political Science highlights how PUMBA achieves similar conclusions to the non-private analysis as opposed to an approach that would not consider the randomness due to privacy.

Considering limitations and possible improvements, if the privatized statistics are not computed on large enough samples, then PUMBA may not be a good solution since it relies on asymptotic approximations which would not hold in this case. However, based on our experiments, it would appear that PUMBA tends to be conservative in small sample scenarios. Another aspect concerns the use of different types of DP mechanisms: indeed, while other types of DP mechanisms can be employed within PUMBA, currently the theoretical guarantees can only be achieved when privatizing the data through additive mechanisms. Finally, an implicit requirement for our sampling procedure is for the privacy budget $\epsilon$ to be reasonably small: this allows the proposed posterior sampling to be manageable from a feasible target support. While the asymptotic framework cannot be removed given the nature of the method, other limitations can be addressed in future work, starting from weakening assumptions such as those on the form of the DP mechanism as well as the exploration of more efficient sampling algorithms (e.g. random walk sampling).

The theory developed in this paper could be potentially adapted to justify similar approximations in other sampling schemes. For example, \citet{bernstein2018differentially,bernstein2019differentially} substitute a Gaussian central limit theorem approximation in their Gibbs sampler without formal justification. We conjecture that using a local limit theorem, along with an adaptation of our analyses, could potentially give theoretical support to their procedure. 

%\textcolor{blue}{Jordan: add discussion on alternative with Gibbs style sampler.}

% {\color{red}
% - SHORT RECAP
% - LIMITATIONS (asymptotic, mechanism, rejection sampling with large epsilon (small target))
% - FUTURE WORK (weakening conditions?, better sampler e.g. random walk depending on mechanism, other applications to sensitive data, to linear regression + noise)
% }

\section*{Acknowledgments}
The clusters at the Center for Research Computing and Data at the University of Pittsburgh were used to support some of the simulation studies, primarily those in Section 4.2. 
ChatGPT 5.3 was used as a support to clean and comment code included in the supplementary material, as well as to shorten or rephrase a few portions of the manuscript. The authors have reviewed all the material and take full responsibility for its content.

\bigskip
\begin{center}
{\large\bf SUPPLEMENTARY MATERIAL}
\end{center}

\begin{description}
%\item[Title:] Brief description. (file type)

\item[supplement.pdf] Document with proofs, technical details, and ancillary results. 

\item[supplement.zip] R code to reproduce the analyses in this paper.
\end{description}

\bigskip
\begin{center}
{\large\bf DATA AVAILABILITY STATEMENT}
\end{center}

The authors confirm that the data supporting the findings of this study are available within the article and its supplementary materials. The original data can also be found from the following resources available in the public domain: \url{https://www.census.gov/programs-surveys/acs/news/data-releases/2022/release.html#fiveyear}

\newpage

\appendix
\section{Proofs and Technical Details}
\label{app:proofs}

In this appendix we provide all details and preliminary results needed to prove the final results presented in the main manuscript. To facilitate understanding in a self-contained manner, we explicitly state conditions needed for these results without referring to the assumptions listed in the main manuscript, except where appropriate to do so.

\subsection{Proof of Theorem 1}

To prove Theorem 1, we firstly need some preliminary results. Let us start with the following lemma which is inspired by \citet{bioche2016approximation}, who study improper priors as a sequence of proper priors. However, while their results only give convergence in distribution, we are able to assert point-wise convergence of the posterior densities, which implies convergence in total variation distance. 
%\begin{comment}
\begin{lemma}\label{lem:seqPrior}
    Let $p_n$ be a sequence of densities on $\RR^k$, let $m(\sdp\,|\,T_n)$ be a conditional density that does not depend on $n$, both with respect to a base measure $\lambda$, and let $\sdp$ be held fixed. Suppose there exists a sequence of positive values $(a_n)_{n=1}^\infty$ such that $a_n p_n$ converges pointwise to a non-zero function $\tilde{p}:\RR^k\rightarrow \RR^{\geq 0}$, where $m(\sdp\,|\, T_n)\,\tilde{p}(T_n)$ is integrable with respect to $\lambda$. Suppose  further that there exists a function $g:\RR^k\rightarrow \RR^{\geq 0}$ such that $a_n p_n(T_n) \leq g(T_n)$ for all $T_n$ and $m(\sdp\,|\, T_n)\,g(T_n)$ is integrable with respect to $\lambda$. Then, the posterior distribution 
    $$p_n(T_n\,|\, \sdp) = \frac{m(\sdp\,|\, T_n)\,p_n(T_n)}{\int m(\sdp\,|\,T_n)\,p_n(T_n)\,d\lambda(T_n)},$$
    converges pointwise to 
    $$p_\infty (T_n\,|\,\sdp) = \frac{m(\sdp\,|\,T_n)\, \tilde{p}(T_n)}{\int m(\sdp\,|\,T_n)\,\tilde{p}(T_n)\,d\lambda(T_n)}.$$ 
    It follows that
    $$\mathrm{TV}(p_n(T_n\,|\,\sdp), p_\infty (T_n\,|\,\sdp)) \rightarrow 0.$$ 
\end{lemma}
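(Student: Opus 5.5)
The plan is to rescale both numerator and denominator of the posterior by $a_n$, which leaves $p_n(T_n\,|\,\sdp)$ unchanged:
\[p_n(T_n\,|\,\sdp)=\frac{m(\sdp\,|\,T_n)\,a_np_n(T_n)}{\int m(\sdp\,|\,y)\,a_np_n(y)\,d\lambda(y)}.\]
We then treat the numerator and denominator separately.

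\textbf{Numerator.} For fixed $T_n$, the numerator converges to $m(\sdp\,|\,T_n)\,\tilde p(T_n)$, directly from the pointwise convergence $a_np_n\to\tilde p$.

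\textbf{Denominator.} The integrand $m(\sdp\,|\,y)\,a_np_n(y)$ converges pointwise to $m(\sdp\,|\,y)\,\tilde p(y)$. It is dominated by $m(\sdp\,|\,y)\,g(y)$, which is $\lambda$-integrable by hypothesis. The Dominated Convergence Theorem therefore gives
\[\int m(\sdp\,|\,y)\,a_np_n(y)\,d\lambda(y)\rightarrow \int m(\sdp\,|\,y)\,\tilde p(y)\,d\lambda(y)\efdeq Z.\]
Finiteness of $Z$ follows from the assumed integrability of $m\tilde p$.

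\textbf{Main obstacle: $Z>0$.} We need $Z>0$ so that the ratio of limits is the limit of ratios. The hypothesis gives only that $\tilde p$ is non-zero, so positivity of $Z$ needs more: $m(\sdp\,|\,\cdot)\,\tilde p$ must be positive on a set of positive $\lambda$-measure. I would secure this by reading "non-zero" as "not $\lambda$-a.e.\ zero on the set where $m(\sdp\,|\,\cdot)>0$", and state it explicitly. In the intended application, $\tilde p$ is a positive constant (or a positive indicator of $\mathcal T_n$). Assumption \ref{a.support_mechanism} then supplies $\int m(\sdp\,|\,y)I(y\in\mathcal T_1)\,d\lambda(y)\geq c>0$, so the point is harmless there. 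I would also note that each finite-$n$ denominator must be positive for $p_n(T_n\,|\,\sdp)$ to be defined; this is implicit in the statement. With $Z\in(0,\infty)$, the quotient of convergent sequences gives pointwise convergence of $p_n(T_n\,|\,\sdp)$ to $p_\infty(T_n\,|\,\sdp)$ for every $T_n$.

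\textbf{Total variation.} Both $p_n(\cdot\,|\,\sdp)$ and $p_\infty(\cdot\,|\,\sdp)$ are probability densities with respect to the common measure $\lambda$, and the former converges to the latter $\lambda$-a.e. Scheffé's lemma then yields $\int|p_n(y\,|\,\sdp)-p_\infty(y\,|\,\sdp)|\,d\lambda(y)\to0$. Hence $\mathrm{TV}=\tfrac12\int|\cdot|\,d\lambda\to0$.

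Alternatively, one can avoid Scheffé by bounding $|p_n-p_\infty|$ by $m g/(\inf_n \text{denominator})+m\tilde p/Z$. This bound is integrable once the denominators are bounded away from zero for large $n$, which holds by their convergence to $Z>0$. Dominated convergence then applies again.
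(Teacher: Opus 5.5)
Your proposal is correct and follows the paper's proof essentially step for step: rescale numerator and denominator by $a_n$, get pointwise convergence of the numerator, apply dominated convergence with $m(\sdp\,|\,\cdot)\,g$ to the denominator, and conclude by Scheff\'e's lemma. Your explicit check that $\int m(\sdp\,|\,y)\,\tilde p(y)\,d\lambda(y)>0$ covers a point the paper leaves implicit, since a merely ``non-zero'' $\tilde p$ does not guarantee it, and your reading of the hypothesis is the right way to close it.
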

\begin{proof}
    We can rewrite the sequence of posterior distributions as 
    $$p_n(T_n\,|\,\sdp) = \frac{m(\sdp\,|\,T_n) \, a_n p_n(T_n)}{\int m(\sdp\,|\,T_n=y)\, a_n p_n(T_n=y)\,d\lambda(y)}.$$
    The numerator converges pointwise to $m(\sdp\,|\,T_n) \, \tilde{p}(T_n)$ by assumption and the denominator converges pointwise to $\int m(\sdp\,|\,T_n=y)\,\tilde{p}(T_n=y)\,d\lambda(y)$ by the dominated convergence theorem. By Scheff\'ees lemma (Corollary 2.30, \citealp{VanDerVaart2000}), this implies convergence in total variation.
\end{proof}
Note that Lemma \ref{lem:seqPrior} does not require that $\tilde{p}$ integrates to 1. In fact, we will be primarily interested in the case that $\tilde{p}$ is an ``improper distribution''. We now report an existing result which will be needed for our following proofs.
%\end{comment}
\begin{lemma}[Corollary 1: \citealp{shervashidze2006convergence}]\label{lem:shervashidze}
   Let $(u_{n,1},\ldots, u_{n,n})_{n=1}^\infty$ be a triangular array of random vectors in $\RR^k$ such that $u_{n,1},\ldots, u_{n,n}$ are independent for all $n$. Let $S_n=\sum_{j=1}^n u_{n,j}$ and let $p_{S_n}$ denote the density of $S_n$. Suppose that: (i) $u_{n,j}$ have densities (with respect to Lebesgue) bounded above by $An^{k/2}$ for some constant $A>0$; (ii) $\EE u_{n,j}=0$; (iii) $\EE S_n S_n^\top=I$; and (iv) $\sigma^2_{n,j}\defeq \EE \lVert u_{n,j} \rVert^2$ are upper bounded by $\sigma^2_{n,j} \leq \frac{k+1}n$. Then, if $S_n$ converges in distribution to $N_k(0,I)$, we have that 
   \[\sup_v|p_{S_n}(v)-\phi(v)| \overset p \rightarrow 0,\]
   where $\phi$ is the density of $N_k(0,I)$.
   \end{lemma}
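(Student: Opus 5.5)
The plan is a Fourier-inversion argument, the standard route to local limit theorems. Let $\psi_n(t)=\EE e^{i\langle t,S_n\rangle}=\prod_{j=1}^n\psi_{n,j}(t)$, where $\psi_{n,j}$ is the characteristic function of $u_{n,j}$.

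\textbf{Step 1 (inversion is legitimate).} Since each $u_{n,j}$ has a bounded density, Plancherel gives $\int|\psi_{n,j}|^2\,dt=(2\pi)^k\int p_{u_{n,j}}^2 \le (2\pi)^k A n^{k/2}$. Hence, for $n\geq 2$, $\psi_n$ is integrable, and both $S_n$ and $N_k(0,I)$ have continuous densities given by inversion. This yields the deterministic bound
\[
\sup_v|p_{S_n}(v)-\phi(v)|\le (2\pi)^{-k}\int_{\RR^k}\bigl|\psi_n(t)-e^{-\lVert t\rVert^2/2}\bigr|\,dt .
\]
(The supremum is non-random, so ``$\overset p\rightarrow$'' is just ordinary convergence.) We split this integral over three regions: $\lVert t\rVert\le M$, $M<\lVert t\rVert\le \delta\sqrt n$, and $\lVert t\rVert>\delta\sqrt n$.

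\textbf{Step 2 (inner and middle regions).} For fixed $M$, convergence in distribution of $S_n$ to $N_k(0,I)$ implies $\psi_n\to e^{-\lVert t\rVert^2/2}$ uniformly on $\lVert t\rVert\le M$, so the inner integral vanishes. For the middle region the target is a Gaussian-type envelope $|\psi_n(t)|\le e^{-c\lVert t\rVert^2}$. I would get it from a second-order expansion of each factor, $|\psi_{n,j}(t)|\le 1-\tfrac12\EE\langle t,u_{n,j}\rangle^2+\text{(remainder)}$, summed in logarithms using $\sum_j\EE u_{n,j}u_{n,j}^\top=I$, from (ii)--(iii). Condition (iv), $\sigma^2_{n,j}\le (k+1)/n$, makes every summand uniformly infinitesimal, so $\lVert t\rVert\sigma_{n,j}\le \delta\sqrt{k+1}$ stays small there. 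The remainder cannot be controlled by third moments, which are not assumed. Instead I would use the convergence in distribution (equivalently, a Lindeberg-type negligibility) together with truncation of $u_{n,j}$ at level $\eta/\lVert t\rVert$. The envelope then makes $\int_{\lVert t\rVert>M}$ uniformly small as $M\to\infty$, and the Gaussian tail is trivially small.

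\textbf{Step 3 (outer region, the main obstacle).} For $\lVert t\rVert>\delta\sqrt n$ there is no expansion to use, and the bounded-density hypothesis (i) must do the work. I would use a quantitative bound for characteristic functions of bounded densities with bounded variance: if a density is bounded by $B$ and has second moment $\sigma^2$, then $|\psi(t)|\le 1-c(k)\min\{1,\lVert t\rVert^2\sigma^2\}\,/\,(B\sigma^k)^{2}$ for $\lVert t\rVert$ beyond the scale $1/\sigma$, with a Doeblin-type argument behind it. The scalings $A n^{k/2}$ and $\sigma_{n,j}\lesssim n^{-1/2}$ are exactly matched, so $B\sigma^k$ stays bounded. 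This yields $\sup_{\lVert t\rVert>\delta\sqrt n}|\psi_{n,j}(t)|\le 1-\rho$ for all $j$ carrying a non-negligible share of the variance. There must be of order $n$ such indices, because $\sum_j\sigma^2_{n,j}=k$ while each $\sigma^2_{n,j}\le (k+1)/n$. Combining this with the $L^2$ bound from Step 1 on two of the factors gives
\[
\int_{\lVert t\rVert>\delta\sqrt n}|\psi_n|\le (1-\rho)^{c' n-2}(2\pi)^kAn^{k/2}\to 0 .
\]
Making the bound on $|\psi_{n,j}|$ uniform in $t$ over this unbounded region, rather than pointwise, is where I expect the real difficulty. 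Failing a clean inequality, I would fall back on Shervashidze's own lemma bounding $\sup_{\lVert t\rVert\ge a}|\psi|$ in terms of the density bound and the variance.
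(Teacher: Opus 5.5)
The paper does not prove this lemma; it imports it verbatim as Corollary~1 of \citet{shervashidze2006convergence}, so your argument has to stand on its own. Several parts of it are sound: the inversion bound, the Plancherel/Cauchy--Schwarz estimate $\int|\psi_{n,1}\psi_{n,2}|\le(2\pi)^kAn^{k/2}$, and the inner region via uniform convergence of characteristic functions on compacts. The genuine gap is the middle region.

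Feller's converse gives Lindeberg only in the form $\sum_j\EE\|u_{n,j}\|^2I(\|u_{n,j}\|>\epsilon)\to0$ for each fixed $\epsilon>0$, or along some $\epsilon_n\to0$ of uncontrolled, possibly very slow, rate. That yields the envelope only on $\|t\|\le T_n$ for some slowly growing $T_n$. For $\|t\|$ of order $\delta\sqrt n$, your truncation level $\eta/\|t\|$ is of order $n^{-1/2}$, the natural scale of the summands. At that scale neither the CLT nor (iv) controls the truncated variance. Smallness of $\|t\|\sigma_{n,j}$ does not help without a uniform-integrability property of $\sqrt n\,u_{n,j}$.

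In fact the envelope there does not follow from (ii)--(iv) and the CLT at all. Take $k=1$ and let $\sqrt n\,u_{n,j}$ be distributed as $c_n(B+w_nZ)$, where:
\begin{itemize}
\item $Z\sim N(0,1)$ is independent of $B\in\{0,\pm K_n\}$, with $P(B=\pm K_n)=1/(2K_n^2)$ each;
\item $K_n\to\infty$ and $K_n=o(\sqrt n)$;
\item $w_n=o(K_n/\sqrt n)$ and $c_n=(1+w_n^2)^{-1/2}$.
\end{itemize}
Then (ii)--(iv) and Lindeberg hold. Yet at $t_n=2\pi\sqrt n/(c_nK_n)\in(M,\delta\sqrt n]$ one gets $\psi_n(t_n)=\exp(-2\pi^2nw_n^2/K_n^2)\to1$. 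Only hypothesis (i) excludes this (here the density of $u_{n,j}$ is of order $\sqrt n/w_n$), and your middle-region step never invokes (i).

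The repair is to let (i) do the work on the whole range $\|t\|>M$. This is exactly your Step~3 inequality, with its factor $\min\{1,\|t\|^2\sigma^2\}$ used for all $t$ rather than only beyond $1/\sigma$. The uniformity in $t$ you worry about is elementary. Let $u'_{n,j}$ be an independent copy and set $V=\sqrt n(u_{n,j}-u'_{n,j})$, so $V$ has density at most $A$ and $\EE\|V\|^2\le2(k+1)$. Then $1-|\psi_{n,j}(t)|^2=\EE[1-\cos\langle s,V\rangle]$ with $s=t/\sqrt n$. Take $R=2\sqrt{k+1}$, so $P(\|V\|\le R)\ge1/2$, and let $\omega_{k-1}$ be the volume of the unit ball of $\RR^{k-1}$.

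\textbf{Small $s$.} If $\|s\|\le\pi/R$, use $1-\cos x\ge2x^2/\pi^2$ on $\{\|V\|\le R\}$. The slab $\{|\langle s,V\rangle|\le\tau\|s\|\}\cap\{\|V\|\le R\}$ has probability at most $2\tau A\omega_{k-1}R^{k-1}\le1/4$ for suitable $\tau=\tau(A,k)$. This gives $1-|\psi_{n,j}(t)|^2\ge\tau^2\|s\|^2/(2\pi^2)$.

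\textbf{Large $s$.} If $\|s\|>\pi/R$, the points of the ball where $\langle s,v\rangle$ lies within $\eta$ of $2\pi\ZZ$ form at most $3\|s\|R/\pi$ slabs of width $2\eta/\|s\|$. Their total volume is at most $6\eta\omega_{k-1}R^k/\pi$, independent of $s$. Choosing $\eta=\eta(A,k)$ so that this set carries probability at most $1/4$ gives $1-|\psi_{n,j}(t)|^2\ge(1-\cos\eta)/4$.

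\textbf{Conclusion.} Hence $|\psi_{n,j}(t)|\le\exp(-c\min\{\|t\|^2/n,1\})$ for every $j$, with $c=c(A,k)$, so no counting of indices is needed. Therefore $|\psi_n(t)|\le\exp(-c\min\{\|t\|^2,n\})$.
\begin{itemize}
\item On $\|t\|\le\sqrt n$ this is an integrable envelope, and dominated convergence with the pointwise CLT limit applies.
\item On $\|t\|>\sqrt n$, keep two factors for the Plancherel bound to get $\int|\psi_n|\le e^{-c(n-2)}(2\pi)^kAn^{k/2}\to0$.
\end{itemize}
This single bound replaces both your middle and outer regions, and the Lindeberg truncation step becomes unnecessary.
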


In our case, $S_n=T_n$ is our statistic of interest. Based on this lemma, we will now deliver two additional auxiliary lemmas which apply Lemma \ref{lem:shervashidze} to the settings most useful for this work. More specifically, following the above result, we deliver a local limit theorem for continuous vectors (first lemma below) and then report a similar result for integer-valued vectors (second lemma below). 

\begin{lemma}[Local Limit Theorem for Continuous Vectors]\label{lem:lltcont}
   Let $x_1,x_2,\ldots, x_n$ be i.i.d. random vectors in $\RR^k$ such that: (i) their distribution has a density function which is bounded; (ii) $\EE x_i=\mu$, and $\EE(x_i-\mu)(x_i-\mu)^\top=\Sigma$ which has finite entries and is positive definite. Let $S_n = n^{-1/2}\Sigma^{-1/2}\sum_{i=1}^n (x_i-\mu)$ which has density $p_{S_n}$. Then,
   \[\sup_v |p_{S_n}(v)-\phi(v)| \overset p \rightarrow 0,\]
   where $\phi$ is the density of $N_k(0,I)$.
\end{lemma}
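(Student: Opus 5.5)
The plan is to apply Lemma~\ref{lem:shervashidze} to the triangular array $u_{n,j} = n^{-1/2}\Sigma^{-1/2}(x_j-\mu)$, $j=1,\ldots,n$, so that $S_n=\sum_{j=1}^n u_{n,j}$ is exactly the standardized sum in the statement. Independence of $u_{n,1},\ldots,u_{n,n}$ is immediate from the i.i.d.\ assumption, so the work consists of checking conditions (i)--(iv) and the convergence in distribution of $S_n$.

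For the hypotheses I would proceed as follows.
\begin{itemize}
\item \emph{Condition (ii).} $\EE u_{n,j}=0$ holds trivially.
\item \emph{Condition (iii).} Since the summands are independent and centred,
\[
\EE S_nS_n^\top=\sum_{j=1}^n n^{-1}\Sigma^{-1/2}\Sigma\Sigma^{-1/2}=I .
\]
\item \emph{Condition (iv).} We have $\sigma^2_{n,j}=n^{-1}\EE\lVert\Sigma^{-1/2}(x_j-\mu)\rVert^2=n^{-1}\mathrm{tr}(I)=k/n\le (k+1)/n$.
\item \emph{Convergence in distribution.} $S_n\to N_k(0,I)$ follows from the multivariate Lindeberg--L\'evy central limit theorem, using finite, positive definite $\Sigma$.
\end{itemize}

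The only step requiring care is condition (i), the density bound. If $f$ is the density of $x_j$ with $f\le B$, then $u_{n,j}=A_n(x_j-\mu)$ with $A_n=n^{-1/2}\Sigma^{-1/2}$ invertible. By the change of variables formula, $u_{n,j}$ has density
\[
f_{u}(v)=|\det A_n|^{-1} f\bigl(A_n^{-1}v+\mu\bigr)\le B\,n^{k/2}\det(\Sigma)^{1/2}.
\]
This is of the required form $An^{k/2}$ with $A=B\det(\Sigma)^{1/2}$, which is finite because $\Sigma$ has finite entries. Positive definiteness of $\Sigma$ ensures $\Sigma^{-1/2}$ exists and the transformation is nondegenerate.

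With all hypotheses verified, Lemma~\ref{lem:shervashidze} yields $\sup_v|p_{S_n}(v)-\phi(v)|\to 0$. The quantity is deterministic, so the ``in probability'' convergence is just ordinary convergence. I expect no real obstacle. The main thing to get right is the Jacobian scaling in (i), confirming that the $n^{k/2}$ growth matches exactly what the lemma permits.
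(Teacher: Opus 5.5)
Your proposal is correct and follows the paper's proof essentially line for line. It uses the same triangular array $u_{n,j}=n^{-1/2}\Sigma^{-1/2}(x_j-\mu)$, the same change-of-variables bound $B\det(\Sigma)^{1/2}n^{k/2}$ for condition (i), $\sigma^2_{n,j}=k/n$ for (iv), and the multivariate CLT to invoke Lemma~\ref{lem:shervashidze}; your added remark that the supremum is deterministic, so ``in probability'' reduces to ordinary convergence, is accurate.
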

\begin{proof}
    Let us define the elements of a triangular array by $x_{n,j}=n^{-1/2}\Sigma^{-1/2} (x_j-\mu)$, for all $n=1,2,\ldots$ and $j=1,2,\ldots, n$. Then, $S_n = \sum_{j=1}^n x_{n,j}$. Firstly, note that (i) if the density of $x_i$ is bounded above by $A$, then the density of $x_{n,j}$ is bounded above by $A\det(\Sigma^{1/2})n^{k/2}$ (by change of variables); (ii) $\EE x_{n,j}=0$, (iii) $\EE S_n S_n^\top = I$, and $\sigma^2_{n,j}=\EE \lVert x_{n,j}\rVert^2=\frac{k}{n}\leq\frac{k+1}{n}$. By the central limit theorem (Theorem B, Section 1.9.1, \citealp{serfling1980approximation}), we know that $S_n$ converges in distribution to $N_k(0,I)$. Hence, we have satisfied the conditions of Lemma \ref{lem:shervashidze} and the result follows.
\end{proof}

We now consider the case of integer-valued vectors for which we found the theorem below reported in \citet{gamkrelidze2015local}. This theorem was attributed to \citet{meizler1949multidimensional} which, unfortunately, we were not able to locate directly.

\begin{lemma}[Local Limit Theorem for Integer-Valued Vectors:  \citealp{meizler1949multidimensional,gamkrelidze2015local}]\label{lem:lltint}
   Let $x_1,x_2,\ldots, x_n$ be a sequence of i.i.d. random vectors in $\ZZ^k$. Call $S_n = \sum_{i=1}^n x_i$, which has pmf $p_{S_n}$. Suppose that $\EE x_i=\mu$ and that $\Sigma_n = \EE(S_n-n\mu)(S_n-n\mu)^\top$ is positive definite. If $e_1,\ldots, e_k$ is a basis for a minimal lattice containing the support of $x_i$ and $\det(e_1,\ldots, e_k)=1$, then
   \[\sup_v |p_{S_n}(v)-\phi(v;n\mu,\Sigma_n)| \overset p \rightarrow 0,\]
   where $\phi(v;\mu,\Sigma_n)$ is the pdf of $N_k(n\mu,\Sigma_n)$ evaluated at $v$, and the supremum is over the lattice points $v=\sum_{i=1}^k a_i e_i$ for $a_i \in \ZZ$. 
\end{lemma}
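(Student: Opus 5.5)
The plan is to prove the standard, stronger local limit statement $\sup_v n^{k/2}\,|p_{S_n}(v)-\phi(v;n\mu,\Sigma_n)|\rightarrow 0$; the displayed claim then follows at once. Since $S_n$ is a sum of i.i.d. vectors, the convergence is deterministic, so the ``$\overset p\rightarrow$'' holds trivially. Note also that $\Sigma_n=n\Sigma$ with $\Sigma=\var(x_1)$ positive definite.

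\textbf{Step 1: reduction to $\ZZ^k$.} Let $E=(e_1,\ldots,e_k)$. Because $\det E=1$, applying $E^{-1}$ sends the lattice to $\ZZ^k$ and leaves densities on lattice points unchanged. I also translate by a fixed support point $x_0$, so that the support of $x_1-x_0$ generates $\ZZ^k$. Here I read the minimal-lattice hypothesis as saying that the \emph{differences} of support points generate the lattice, which is the standard aperiodicity condition. Translating both $p_{S_n}$ and the Gaussian by $nx_0$ changes neither side of the supremum, so I may assume that the support generates $\ZZ^k$.

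\textbf{Step 2: Fourier representation.} Let $\varphi(t)=\EE e^{i t^\top x_1}$. Fourier inversion on the torus gives
\[p_{S_n}(v)=(2\pi)^{-k}\int_{[-\pi,\pi]^k}\varphi(t)^n e^{-it^\top v}\,dt,\]
and the Gaussian density has the analogous representation as an integral over $\RR^k$ of $e^{in t^\top\mu-n t^\top\Sigma t/2}e^{-it^\top v}$. I then substitute $t=u/\sqrt n$ and multiply by $n^{k/2}$. The difference, uniformly in $v$, is bounded by the integral of $\bigl|\varphi(u/\sqrt n)^n e^{-i\sqrt n u^\top\mu}-e^{-u^\top\Sigma u/2}\bigr|$ over $\sqrt n[-\pi,\pi]^k$, plus the Gaussian tail outside that cube. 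I split this integral into three regions:
\begin{itemize}
\item[(a)] $|u|\le A$: here the integrand goes to $0$ uniformly by the second-order Taylor expansion of $\log\varphi$ at $0$, i.e.\ the CLT for characteristic functions.
\item[(b)] $A<|u|\le\delta\sqrt n$: Taylor expansion gives $|\varphi(t)|\le e^{-c|t|^2}$ for $|t|\le\delta$, so this region contributes at most $\int_{|u|>A}(e^{-c|u|^2}+e^{-u^\top\Sigma u/2})\,du$, which is small once $A$ is large.
\item[(c)] $|t|\ge\delta$ on the torus: here $|\varphi(t)|\le q<1$, so the contribution is at most $n^{k/2}(2\pi)^k q^n\to 0$.
\end{itemize}
Letting $n\to\infty$ and then $A\to\infty$ gives the result.

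\textbf{Main obstacle.} The key step is the strict bound $|\varphi(t)|<1$ for $t\in[-\pi,\pi]^k\setminus\{0\}$, which together with continuity and compactness gives $q<1$ in (c). Suppose $|\varphi(t)|=1$. Then $t^\top x$ is constant modulo $2\pi$ on the support, so $t^\top w\in2\pi\ZZ$ for every difference $w$ of support points. By Step 1 these differences generate $\ZZ^k$, so $t\in 2\pi\ZZ^k$, and hence $t=0$ in the fundamental cube. This is exactly where the minimality and unit-determinant hypotheses are used. Without them $|\varphi|=1$ at a nonzero point, the lattice mass is spread over a coset structure, and the Gaussian approximation fails by a constant factor.
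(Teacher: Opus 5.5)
Your proof is correct, but it does not follow the paper's route. The paper gives no argument for this lemma: it imports Meizler's theorem as reported by Gamkrelidze, and the authors say they could not check the original source. You instead give the classical self-contained characteristic-function proof (Fourier inversion on the torus, the three-region split, and the aperiodicity bound $q<1$). You also prove the $n^{k/2}$-scaled statement rather than the displayed one, and both choices pay off.

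\emph{The scaled version is what is actually needed.} The displayed, unscaled claim is nearly vacuous on its own. We have $\sup_v\phi(v;n\mu,n\Sigma)=O(n^{-k/2})$, and for any nondegenerate integer-valued i.i.d.\ sum $\sup_v p_{S_n}(v)\to 0$ as well (e.g.\ by a Kolmogorov--Rogozin concentration bound applied to one coordinate). The proof of Theorem \ref{thm:TgivenS} multiplies $p_n$ by $a_n$ and compares it with a Gaussian peaking at $(2\pi)^{-k/2}$. That step uses exactly your scaled statement, with $a_n=n^{k/2}\det(\Sigma_\theta)^{1/2}$; the $n^{1/2}$ written there is right only for $k=1$.

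\emph{Your proof pins down which lattice hypothesis is required.} You read ``minimal lattice'' in the Gnedenko--Meizler sense, meaning differences of support points generate it, and this reading is essential for the scaled result. Under the literal reading (the lattice through the origin generated by the support), take $k=1$ with $x_i$ uniform on $\{1,3\}$. This satisfies the hypothesis with $e_1=1$. Yet $\varphi(\pi)=-1$ and $S_n$ lives on $n+2\ZZ$, so both your step (c) and the scaled conclusion fail. The unscaled display survives there only for the trivial reason above.

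So the citation is shorter, but it leaves both the normalization and the aperiodicity condition implicit. Your argument makes explicit what Assumption \ref{a.bounded_density}(b) must mean and what Theorem \ref{thm:TgivenS} really needs.
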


With the above results, we now have all we need to prove Theorem 1.

%\thmTgivenS*
\begin{proof}[Proof of Theorem 1.]

    We will establish the frequentist result first, which has three steps to the proof. We first derive upper and lower bounds for $p_n(T_n\,|\,\sdp,\theta)$ in the continuous and discrete case separately, then we derive the total variation result in the case that the prior predictive mean is zero, and finally we extend the result to arbitrary prior predictive mean values.\\

    Let $\theta\in \Theta$ be fixed. We write $\EE_\theta$ as abbreviation for either $\EE_{t\sim p(t\,|\,\theta)}$ or $\EE_{T_n\sim p_n(T_n\,|\,\theta)}$ as context should make clear. Denote by  $\Sigma_\theta$, the covariance matrix for the distribution $p(t\,|\,\theta)$. 
   \smallskip  
   
    \noindent \textbf{Step 1} For this step, we will suppose that $\EE_\theta t_i=0$ and address the case $\EE_\theta t(x_i)\neq 0$ in step 2. The objective of this step is to first apply a local limit theorem to approximate $p_n(T_n\,|\,\theta)$ as a Gaussian, and then bound the magnitude of $p_n$ in terms of the magnitude of the associated Gaussian. Ultimately this will allow us to bound the contribution of $p_n$ in the distribution $p_n(T_n\,|\,\sdp,\theta)$. While analogous, we tackle the continuous and discrete cases separately. \\
    
    \noindent \textit{Continuous Case}: Let us assume that $\lambda$ is Lebesgue measure on $\RR^k$ and that $p(t\,|\,\theta)$ is a bounded density (as per Assumption \ref{a.bounded_density}). Then we have that $p_n(n^{1/2} \Sigma^{1/2}_\theta u\,|\,\theta)n^{1/2} \det(\Sigma_\theta)^{1/2}$ is the density of $n^{-1/2} \Sigma^{-1/2}_\theta T_n$ which, by Lemma \ref{lem:lltcont}, converges uniformly to $\phi(u)$, i.e. the density of $N_k(0,I)$. Now, let $a_n=n^{1/2} \det(\Sigma_\theta)^{1/2}$ (suppressing the dependence on $\theta$ for simplicity of presentation) which allows us to define $a_n p_n(v)=n^{1/2}\det(\Sigma_\theta)^{1/2}p_n(v\,0\,|\theta)$. Then, with $\delta>0$, there exists $N$ (also possibly depending on $\theta$) such that for all $n\geq N$, 
    \[\sup_v \Big|n^{1/2} \det(\Sigma_\theta)^{1/2}p_n(v\,|\,\theta)-\phi(n^{-1/2} \Sigma^{-1/2}_\theta v)\Big|<\delta,\] by substituting $v=n^{1/2}\Sigma^{1/2}_\theta u$. So, for $n\geq N$, we have that for every $T_n$, $\phi(n^{-1/2}\Sigma^{-1/2}_\theta T_n)-\delta \leq a_np_n(T_n\,|\,\theta)\leq \phi(n^{-1/2}\Sigma^{-1/2}_\theta  T_n) + \delta$.
    For any fixed $T_n$, the lower bound converges to $(2\pi)^{-k/2} - \delta$ and the upper bound converges to $(2\pi)^{-k/2} + \delta$. Since $\delta$ was chosen arbitrarily, we see that $a_n p_n(T_n\,|\,\theta)$ converges pointwise to $(2\pi)^{-k/2}$. This implies that the support $\mathcal T_n$ converges to the support of $\lambda$. Incorporating the indicator for the support $\mathcal T_n$, we can also say that $$|a_np_n(T_n\,|\,\theta)-(2\pi)^{-k/2}I(T_n\in\mathcal T_n)|\leq |a_np_n(T_n\,|\,\theta)-(2\pi)^{-k/2}|+(2\pi)^{-k/2}|1-I(T_n\in \mathcal T_n)|\rightarrow 0$$.

     Furthermore, since the standard multivariate Gaussian density is maximized at zero, achieving the value $(2\pi)^{-k/2}$, we have for $n\geq N$ that 
     \[a_n p_n(T_n\,|\,\theta)\leq (2\pi)^{-k/2}I(T_n\in \mathcal T_n)(2\pi)^{-k/2}+\delta\efdeq g(T_n),\] where we introduced the indicator variable to give a tighter upper bound. Since $g(T_n)$ is a constant, and we assume that $m(\sdp\,|\,T_n)$ is integrable, so is $m(\sdp\,|\,T_n)g(T_n)$. By the dominated convergence theorem (and taking absolute value), we have that 
    \[\left|\int_{\mathcal T_n} m(\sdp\,|\,T_n=y)a_n p_n(T_n=y\,|\,\theta)\ d\lambda(y) - \int_{\mathcal T_n} m(\sdp\,|\,T_n=y) (2\pi)^{-k/2}\ d\lambda(y)\right|\rightarrow 0.\]

    \noindent \textit{Discrete Case}: By Lemma \ref{lem:lltint}, 
    \[\sup_v \big|p_n(v\,|\,\theta)-\phi(v;0,n\Sigma_\theta)\big| \overset p \rightarrow 0,\]
    where $\phi(v;0,n \Sigma_\theta)$ is the pdf of $N_k(0,n\Sigma_\theta)$ evaluated at $v$. Then, setting $a_n = n^{1/2}\det(\Sigma_\theta)^{1/2}$ we have $a_np_n(v\,|\,\theta)=n^{1/2}\det(\Sigma_\theta)^{1/2}p_n(v)$. Letting $\delta>0$, then there exists $N$ such that for all $n\geq N$, 
    \[\sup_v \big|n^{1/2}\det(\Sigma_\theta)^{1/2}p_n(v\,|\,\theta)-n^{1/2}\det(\Sigma_\theta)^{1/2}\phi(v;0,n\Sigma_\theta)\big| < \delta.\]
    So, for $n\geq N$, we have that for every $T_n$,
    \[n^{1/2}\det(\Sigma_\theta)^{1/2}\phi(T_n;0,n\Sigma_\theta)-\delta 
    \leq a_n p_n(T_n\,|\,\theta) \leq n^{1/2}\det(\Sigma_\theta)^{1/2}\phi(T_n;0,n\Sigma_\theta)+\delta.\]
    For any fixed $T_n$, the lower bound converges to $(2\pi)^{-k/2} - \delta$ and the upper bound converges to $(2\pi)^{-k/2} + \delta$. Since $\delta$ was chosen arbitrarily, we see that $a_n p_n(T_n\,|\,\theta)$ converges pointwise to $(2\pi)^{-k/2}$. This implies that the support $\mathcal T_n$ converges to the support of $\lambda$. Incorporating the indicator for the support $\mathcal T_n$, we can also say that $|a_np_n(T_n\,|\,\theta)-(2\pi)^{-k/2}I(T_n\in\mathcal T_n)|\leq [a_np_n(T_n\,|\,\theta)-(2\pi)^{-k/2}]+(2\pi)^{-k/2}[1-I(T_n\in \mathcal T_n)]\rightarrow 0$.

     Furthermore, $n^{1/2}\det(\Sigma_\theta)^{1/2}\phi(T_n;0,n\Sigma_\theta)$ is maximized at zero, achieving the value $(2\pi)^{-k/2}$, we have for $n\geq N$ that 
     \[a_n p_n(T_n\,|\,\theta)\leq (2\pi)^{-k/2}I(T_n\in \mathcal T_n)\leq(2\pi)^{-k/2}+\delta\efdeq g(T_n),\] where we introduced the indicator variable to give a tighter upper bound. Since $g(T_n)$ is a constant, and we assume that $m(\sdp\,|\,T_n)$ is summable, so is $m(\sdp\,|\,T_n)g(T_n)$. By the dominated convergence theorem (and taking absolute value), we have that 
    \[\left|\int_{\mathcal T_n} m(\sdp\,|\,T_n=y)a_n p_n(T_n=y\,|\,\theta)\ d\lambda(y) - \int_{\mathcal T_n} m(\sdp\,|\,T_n=y) (2\pi)^{-k/2}\ d\lambda(y)\right|\rightarrow 0.\]

    \noindent \textbf{Step 2}: We tackle the continuous and discrete cases simultaneously, allowing for $\lambda$ to be either base measure. Given $0<\delta<(2\pi)^{-k/2}$, there exists $N$ such that for $n\geq N$, $|a_np_n(T_n|\theta)-(2\pi)^{-k/2}I(T_n\in \mathcal T_n)|<\delta$. Then for any $T_n$,  
    \begin{align*}
&\sup_{\sdp} \left| p_n(T_n\,|\,\sdp)-p_\infty^{\mathcal T_n}(T_n\,|\,\sdp)\right|\\
&=\sup_{\sdp} \left| \frac{m(\sdp\,|\,T_n)p_n(T_n)}{\int_{\mathcal T_n} m(\sdp\,|\,T_n=y)p_n(T_n=y)\ d\lambda(y)}-\frac{m(\sdp\,|\,T_n)I(T_n\in \mathcal T_n)}{\int_{\mathcal T_n} m(\sdp\,|\,T_n=y) \ d\lambda(y)}\right|\\
&\leq \sup_{\sdp} m(\sdp\,|\,T_n) \sup_{\sdp} \left|\frac{a_n p_n(T_n)}{\int_{\mathcal T_n} m(\sdp\,|\,T_n=y)a_n p_n(T_n=y)\ d\lambda(y)} - \frac{I(T_n\in \mathcal T_n)}{\int_{\mathcal T_n} m(\sdp\,|\,T_n=y) \ d\lambda(y)}\right|\\
&=L \sup_{\sdp}\max\bigg\{\frac{a_n p_n(T_n)}{\int_{\mathcal T_n} m(\sdp\,|\,T_n=y)a_n p_n(T_n=y)\ d\lambda(y)} - \frac{I(T_n\in \mathcal T_n)}{\int_{\mathcal T_n} m(\sdp\,|\,T_n=y) \ d\lambda(T_n=y)},\\
&\phantom{=L \sup_{\sdp}\max\bigg\{}\frac{I(T_n\in \mathcal T_n)}{\int_{\mathcal T_n} m(\sdp\,|\,T_n=y) \ d\lambda(T_n=y)}-\frac{a_n p_n(T_n)}{\int_{\mathcal T_n} m(\sdp\,|\,T_n=y)a_n p_n(T_n=y)\ d\lambda(y)}\bigg\}\\
&\leq L\sup_{\sdp}\max\bigg\{\frac{((2\pi)^{-k/2}+\delta)I(T_n\in \mathcal T_n)}{\int_{\mathcal T_n} m(\sdp\,|\,T_n=y)\ d\lambda(y)((2\pi)^{-k/2}-\delta)} - \frac{I(T_n\in \mathcal T_n)}{\int_{\mathcal T_n} m(\sdp\,|\,T_n=y)\ d\lambda(y)},\\
&\phantom{\leq L\sup_{\sdp}\max\bigg\{} \frac{I(T_n\in \mathcal T_n)}{\int_{\mathcal T_n} m(\sdp\,|\,T_n=y) \ d\lambda(T_n=y)}-\frac{((2\pi)^{-k/2}-\delta) I(T_n\in \mathcal T_n)}{\int_{\mathcal T_n} m(\sdp\,|\,T_n=y)\ d\lambda(y) ((2\pi)^{-k/2}+\delta)}\bigg\}\\
&=L \frac{2\delta I(T_n\in \mathcal T_n)}{\int_{\mathcal T_n} m(\sdp\,|\,T_n=y)\ d\lambda(y) ((2\pi)^{-k/2} -\delta)}\\
&\leq  \frac{2\delta M}{c((2\pi)^{-k/2}-\delta)},
    \end{align*}
    where by Assumptions \ref{a.location_mechanism} and \ref{a.support_mechanism}, we have that the following two quantities are bounded: $\sup_{\sdp}m(\sdp\,|\,T_n)\leq L<\infty$  and $\int_{\mathcal T_n} m(\sdp\,|\,T_n=y)\ d\lambda(y)\geq \int_{\mathcal T_1}m(\sdp\,|\,T_n=y) \ d\lambda(y)\geq c>0$. Finally, since $\delta>0$ was chosen arbitrarily, and the bound does not depend on $T_n$, we have that 
   \[\sup_{T_n}\sup_{\sdp} \left| p_n(T_n\,|\,\sdp,\theta)-p_\infty^{\mathcal T_n}(T_n\,|\,\sdp)\right|\rightarrow 0,\]
   which implies that $\sup_{\sdp} \mathrm{TV}(p_n(T_n\,|\,\sdp,\theta),p_\infty^{\mathcal T_n}(T_n\,|\,\sdp))\rightarrow 0$, by Scheff\'es Lemma. 

\smallskip

    \noindent \textbf{Step 3}: Still considering both continuous and discrete cases simultaneously, suppose that $\EE t(x_i)=\mu$ which may not be zero. Call $t_i'=t(x_i)-\mu$, $T_n'=T_n-n\mu$, and $\sdp' = \sdp-n\mu$. Since the privacy mechanism is a location family, we have that $m(\sdp'\,|\,T_n')=m(\sdp\,|\,T_n)$ is the density for $\sdp'$ given $T_n'$. Call $p'_n$ the density for $T_n'$. Applying the mean-zero result, we have that 
    \[ \sup_{\sdp'}\mathrm{TV}\left(\frac{m(\sdp'\,|\,T_n')\,p'_n(T_n')}{\int m(\sdp'\,|\,T_n'=y)\,p_n'(T_n'=y)\,d\lambda(y)},\frac{m(\sdp'\,|\,T_n')}{\int m(\sdp'\,|\,T_n'=y)\,d\lambda(y)}\right) \overset p \rightarrow 0.\]
    We can substitute $m(\sdp\,|\,T_n)=m(\sdp'\,|\,T_n')$, $p_n(T_n)=p'_n(T_n')$, and $\int_{\mathcal T_n} m(\sdp\,|\,T_n=y)\,p_n(T_n=y)\,d\lambda(y)=\int_{\mathcal T_n} m(\sdp'\,|\,T_n'=y)p_n'(T_n'=y)\,d\lambda(y)$, which gives the desired result. 

    Finally, for the approximation in the Bayesian setting in 2.,  let $\sdp^{(n)}$ be a sequence of random variables. We write
    \begin{align*}
        \mathrm{TV}\left(p_n(T_n\,|\,\sdp^{(n)}),p_\infty^{\mathcal T_n}(T_n\,|\,\sdp^{(n)})\right)
        &=\frac 12 \int \left| p_n(T_n\,|\,\sdp^{(n)}) - p_\infty^{\mathcal T_n}(T_n\,|\,\sdp^{(n)})\right| \ d\lambda(T_n)\\
        &=\frac 12 \int \left| \EE_{\theta\sim \pi}\left[p_n(T_n\,|\,\sdp^{(n)},\theta) - p_\infty^{\mathcal T_n}(T_n\,|\,\sdp^{(n)})\right]\right| \ d\lambda(T_n)\\
        &\leq \EE_{\theta\sim \pi}\frac 12 \int \left| p_n(T_n\,|\,\sdp^{(n)},\theta) - p_\infty^{\mathcal T_n}(T_n\,|\,\sdp^{(n)})\right| \ d\lambda(T_n)\\
        &=\EE_{\theta\sim \pi} \mathrm{TV}\left(p_n(T_n\,|\,\sdp^{(n)},\theta),p_\infty^{\mathcal T_n}(T_n\,|\,\sdp^{(n)})\right)\\
        &\overset{a.s.}\rightarrow 0,
    \end{align*}
    where the inequality is by Jensen, and we apply the Dominated Convergence Theorem to establish the final convergence, since $\mathrm{TV}\left(p_n(T_n\,|\,\sdp^{(n)},\theta),p_\infty^{\mathcal T_n}(T_n\,|\,\sdp)\right)\overset{a.s.}\rightarrow 0$ for all $\theta$ and total variation is bounded by 1, which is integrable since $\pi$ is a proper prior. 
\end{proof}

\subsection{Proof of Theorem 2 }

We now prove Theorem 2. We first provide a technical lemma which is needed to provide the proofs to our main results. Lemma \ref{lem:condExp} establishes when we can apply a conditional expectation and still have convergence in probability to zero: this will be applied to the Bernstein von-Mises total variation convergence.

\begin{lemma}\label{lem:condExp}
    Suppose that $(X_n,Y_n)$ is a sequence of random variables on some measure space $(\mscr X\times \mscr Y,\mscr F)$. Suppose further that $f:\mscr Y\rightarrow [0,D]$ is a measurable function, where $D>0$ is a fixed constant. Then, if $f(Y_n)\overset p\rightarrow 0$, then $\EE_{Y_n|X_n} f(Y_n)$ converges in mean to zero. % \overset p \rightarrow 0.$$ 
\end{lemma}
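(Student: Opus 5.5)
The natural route is to show that $Z_n \defeq \EE_{Y_n|X_n} f(Y_n)$ satisfies $\EE Z_n \to 0$. Convergence in mean then follows immediately, because $Z_n\ge 0$ and so $\EE|Z_n| = \EE Z_n$. The key identity is the tower property: since $f$ is measurable and bounded, $f(Y_n)$ is integrable, and the conditional expectation $\EE_{Y_n|X_n} f(Y_n) = \EE[f(Y_n)\,|\,X_n]$ is well defined, taking values in $[0,D]$ almost surely. Then
\[
\EE\big|\EE[f(Y_n)\,|\,X_n]\big| = \EE\,\EE[f(Y_n)\,|\,X_n] = \EE f(Y_n).
\]
So the whole statement reduces to showing $\EE f(Y_n)\to 0$.

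For that reduction, I would use that a sequence bounded in $[0,D]$ and converging to zero in probability also converges in $L^1$. Fix $\eta>0$ and split the expectation on the event $\{f(Y_n)\le \eta\}$ and its complement:
\[
\EE f(Y_n) \le \eta + D\,\mathbb{P}\big(f(Y_n)>\eta\big).
\]
The second term tends to zero by the hypothesis $f(Y_n)\overset p\rightarrow 0$. This gives $\limsup_n \EE f(Y_n)\le \eta$, and since $\eta$ is arbitrary, $\EE f(Y_n)\to 0$. Equivalently, this is the bounded convergence theorem in its convergence-in-probability form, since uniform boundedness gives uniform integrability.

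Combining the two displays gives $\EE|Z_n|\to 0$, i.e. convergence in mean. By Markov's inequality this also gives $Z_n\overset p\rightarrow 0$, which is the form needed in Theorem \ref{thm:fullTV}.

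There is no real obstacle here. The only points needing care are that the conditional expectation exists and is a measurable function of $X_n$, and that the tower property applies. Both hold because $f(Y_n)$ is a bounded, hence integrable, random variable on the joint space, possibly after taking a regular conditional distribution to interpret $\EE_{Y_n|X_n}$ as an integral against $Y_n\,|\,X_n$. No assumption on the dependence between $X_n$ and $Y_n$ is needed.

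When the lemma is applied in Theorem \ref{thm:fullTV}, with $X_n=\sdp$, $Y_n=T_n$ and $f=\mathrm{TV}(\pi_n(\cdot\,|\,T_n),\pi_\infty(\cdot\,|\,T_n))\in[0,1]$, the key check is that the joint law of $(\sdp,T_n)$ is the frequentist law under $\theta_0$. This ensures that Assumption \ref{a.bvm} supplies the required convergence in probability of $f(Y_n)$.
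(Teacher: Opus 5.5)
Your proof is correct and takes essentially the same route as the paper: use the tower property together with $f\ge 0$ to identify $\EE\,\lvert\EE[f(Y_n)\,|\,X_n]\rvert$ with $\EE f(Y_n)$, then show $\EE f(Y_n)\to 0$ by bounded convergence under convergence in probability. The only difference is that you prove the bounded-convergence step directly via $\EE f(Y_n)\le \eta + D\,\mathbb{P}(f(Y_n)>\eta)$, whereas the paper simply cites the dominated convergence theorem.
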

\begin{proof}
    Since $|f|\leq D$ and $f(Y_n) \overset p \rightarrow 0$, by the dominated convergence theorem, we have that $\EE_{X_n,Y_n} f(Y_n) \overset p \rightarrow 0$. By the law of total expectation, this is the same as $\EE_{X_n}\left[\EE_{Y_n|X_n} f(Y_n)\right] \overset p \rightarrow 0$. Since $f\geq 0$, this can also be expressed as $\EE_{X_n}\left|\EE_{Y_n|X_n} f(Y_n)-0\right| \overset p \rightarrow 0$, which is the definition of convergence in mean to zero. %Hence the quantity converges in probability to zero.
\end{proof}

We now employ this lemma for the main proof below.

%\thmFullTV*
\begin{proof}[Proof of Theorem 2.]
    We start by decomposing the total variation into several parts by the triangle inequality, and then apply the data processing inequality:
    \begin{align}
\mathrm{TV}&\big(\pi_n(\theta\,|\,T_n)\,p_n(T_n\,|\,\sdp),\pi_\infty(\theta\,|\,T_n)\,p_\infty^{\mathcal T_n}(T_n\,|\,\sdp)\big) \nonumber \\
& \leq \mathrm{TV}\big(\pi_n(\theta\,|\,T_n)\,p_n(T_n\,|\,\sdp),\pi_n(\theta\,|\,T_n)\,p_\infty^{\mathcal T_n}(T_n\,|\,\sdp)\big) \nonumber \\
&\phantom{\leq} +\mathrm{TV}(\pi_n(\theta\,|\,T_n)\,p_\infty^{\mathcal T_n}(T_n\,|\,\sdp),\pi_n(\theta\,|\,T)\,p_n(T_n\,|\,\sdp,\theta_0)\big) \nonumber \\
&\phantom{\leq}+\mathrm{TV}\big(\pi_n(\theta\,|\,T_n)\,p_n(T_n\,|\,\sdp,\theta_0),\pi_\infty(\theta\,|\,T_n)\,p_n(T_n\,|\,\sdp,\theta_0)\big) \nonumber \\
&\phantom{\leq}+\mathrm{TV}\big(\pi_\infty(\theta\,|\,T_n)\,p_n(T_n\,|\,\sdp,\theta_0),\pi_\infty (\theta\,|\,T_n)\,p_\infty^{\mathcal T_n}(T_n\,|\,\sdp)\big) \nonumber \\
&\leq \mathrm{TV}\big(p_n(T_n\,|\,\sdp),p_\infty^{\mathcal T_n}(T_n\,|\,\sdp)\big)\label{eq:1}\\
&\phantom{\leq}+ \mathrm{TV}\big(p_\infty^{\mathcal T_n}(T_n\,|\,\sdp),p_n(T_n\,|\,\sdp,\theta_0)\big)\label{eq:2}\\
&\phantom{\leq}+\mathrm{TV}\big(\pi_n(\theta\,|\,T_n)\,p_n(T_n\,|\,\sdp,\theta_0),\pi_\infty(\theta\,|\,T_n)\,p_n(T_n\,|\,\sdp,\theta_0)\big)\label{eq:3}\\
&\phantom{\leq}+\mathrm{TV}\big(p_n(T_n\,|\,\sdp,\theta_0),p_\infty^{\mathcal T_n}(T_n\,|\,\sdp)\big).\label{eq:4}
\end{align}
The terms in \eqref{eq:1}, \eqref{eq:2} and \eqref{eq:4}all converge almost surely with respect to $\sdp|\theta_0$ by Theorem 1; thus they are also $o_p(1)$ with respect to $\sdp|\theta_0$. For the term in \eqref{eq:3}, we can write it as:
\begin{align*}
    &\mathrm{TV}\big(\pi_n(\theta\,|\,T_n)\,p_n(T_n\,|\,\sdp,\theta_0),\pi_\infty(\theta\,|\,T_n)\,p_n(T_n\,|\,\sdp,\theta_0)\big)\\
    &=\frac 12\int \int \Big|\pi_n(\theta\,|\,T_n=y)\,p_n(T_n=y|\,\sdp,\theta_0)-\pi_\infty(\theta\,|\,T_n=y)\,p_n(T_n=y\,|\,\sdp,\theta_0)\Big|d\mu(\theta) d \lambda(y)\\
    &=\frac 12\int \int \Big|\pi_n(\theta\,|\,T_n=y)-\pi_\infty(\theta\,|\,T_n=y)\Big|d\mu(\theta)  \ p_n(T_n=y\,|\,\sdp,\theta_0)\lambda(y)\\
    &=\EE\limits_{T_n\sim \pi_n(T_n\,|\,\sdp,\theta_0)} \mathrm{TV}\big(\pi_n(\theta\,|\,T_n),\pi_\infty(\theta\,|\,T_n)\big).
\end{align*}
By Assumption \ref{a.bvm}, $\mathrm{TV}\big(\pi_n(\theta\,|\,T_n),\pi_\infty(\theta\,|\,T_n)\big)\overset p\rightarrow 0$ with respect to $T_n\,|\,\theta_0$ as $n\rightarrow \infty$. Since total variation takes values in $[0,1]$, by Lemma \ref{lem:condExp}, we have that 
\[\EE\limits_{T_n\sim p_n(T_n\,|\sdp,\theta_0)} \mathrm{TV}\big(\pi_n(\theta\,|\,T_n),\pi_\infty(\theta\,|\,T_n)\big)\overset p\rightarrow 0,\] 
since convergence in mean implies convergence in probability. 

We have shown that all four terms in \eqref{eq:1}-\eqref{eq:4} are $o_p(1)$. Thus, the result holds. 
\end{proof}

\begin{comment}
\subsection{MCMC Method to Sample \texorpdfstring{$T_n$}{T_n}}\label{s:mcmc}
We propose the following MCMC strategy to sample $t(x)|\sdp$:
\begin{enumerate}
    \item Sample $\theta\sim \pi(\theta)$
    \item Sample $t(x_i')\,|\,\theta \sim p(t(x_i')\,|\,\theta)$
    \item Given a previous $t(x_1),\ldots, t(x_n)$, swap $t(x_i)$ for $t(x_i')$ with probability
    \[ \min\left\{1,\frac{m(\sdp\,|\,T'_n)}{m(\sdp\,|\,T_n)}\right\}\]
\end{enumerate}
Note that if $m$ is an $\epsilon$-DP mechanism, then the acceptance probability is bounded below by $\exp(-\ep)$. This ensure that we are able to effectively propose and accept new values. This method is similar to the data augmentation MCMC approach of \citet{ju2022data}. Once we have run the MCMC sufficiently long, we can sample $\theta\,|\,T_n$ using standard methods, or use the large sample approximation proposed in the manuscript. 
\end{comment}
\section{Normal Approximate Posterior Sampling}
\label{s.totcov_approx}

Modifications ca be made to the proposed sampling algorithm in Section \ref{s:sample} as mentioned therein. Indeed, since we are mainly interested in the posterior mean and covariance for common inferential tasks, we can use the law of total covariance decompose the posterior covariance:
%Instead of sampling both $T|\sdp$ and $\theta|T$, we can observe that 
\[\text{cov}(\theta\,|\,\sdp)=\EE_{T_n\sim \sdp} \text{cov}(\theta\,|\,T_n) + \text{cov}_{T_n\sim \sdp}(\EE (\theta\,|\,T_n)),\]
where $\text{cov}(\theta\,|\,T_n)$ and $\EE(\theta\,|\,T_n)$ are the posterior covariance and posterior mean of $\theta\,|\,T_n$. If we have $R$ i.i.d. samples $T^{(1)}_n,\ldots, T^{(R)}_n\sim p(T_n\,|\,\sdp)$, we can approximate $\text{cov}(\theta\,|\,\sdp)$ using 
\[\widehat{\text{cov}}(\theta|\sdp) = \frac 1R \sum_{i=1}^R\widehat{\text{cov}}(\hat\theta(T^{(i)}_n))+\frac{1}{R-1}\sum_{i=1}^R\left(\hat\theta(T^{(i)}_n)-\overline{\theta}\right)\left(\hat\theta(T^{(i)}_n)-\overline {\theta}\right)^\top,\]
where $\hat\theta(T^{(i)}_n)$ is an efficient estimator, such as the MLE, $\bar\theta$ is the  mean of $\hat\theta(T^{(1)}_n),\ldots,\hat\theta(T^{(R)}_n)$ and $\widehat{\text{cov}}(\hat\theta(T^{(i)}_n))$ is an estimator for the covariance of $\hat\theta(T^{(i)}_n)$. If we expect the posterior distribution $\theta\,|\,\sdp$ to be approximately normal, then we can use $\hat\theta$ and $\widehat{\text{cov}}(\hat\theta)$ to produce credible regions. A modified version of Algorithm \ref{algo.sample} based on this approach is given in Algorithm \ref{algo.totcov}. The benefit of this approach is that we no longer need to sample $\theta\,|\,T_n$, avoiding the computational cost and additional Monte Carlo error. On the other hand, if the approximate posterior distribution is not approximately normally distributed, then credible regions should be estimated by sampling both $T^{(i)}_n\,|\,\sdp$ and $\theta_i\,|\,T^{(i)}_n$, instead of through the estimated covariance matrix.

\vspace{0.1em}
\setcounter{algorithm}{1}

\begin{algorithm}
\footnotesize % Smaller font to reduce vertical space
\caption{PUMBA (Mean and Covariance)}
\begin{algorithmic}[1]
\setlength{\itemsep}{1pt} % Reduce line spacing
\REQUIRE Private output $\sdp$; Privacy mechanism $m$; Efficient estimator $\hat\theta(T_n)$; Number of Monte Carlo samples $R$; Support $\mathcal{T}_n$; Sample size $n$.
\ENSURE  $\bar{\theta}$ and $\widehat{\text{cov}}(\hat{\theta})$ as the approximate posterior mean and covariance.
\STATE Sample $T^{(1)}_n,\ldots, T^{(R)}_n \overset{\text{iid}}{\sim} p_\infty^{\mathcal T_n}(T_n\,|\,\sdp)$ 
\FOR{each $T^{(i)}_n$}
    \STATE Compute $\hat{\theta}(T^{(i)}_n)$.
    \STATE Compute $\widehat{\text{cov}}(\hat{\theta}(T^{(i)}_n))$ (the estimated covariance of $\hat{\theta}(T^{(i)}_n)$).
\ENDFOR
\STATE Compute parameter estimate:
\[
\bar{\theta} = \frac{1}{R} \sum_{i=1}^{R} \hat{\theta}(T^{(i)}_n).
\]
\STATE Compute estimated covariance:
\[
\widehat{\text{cov}}(\hat{\theta}) = \frac{1}{R} \sum_{i=1}^{R} \widehat{\text{cov}}(\hat{\theta}(T^{(i)}_n))
+ \frac{1}{R-1} \sum_{i=1}^{R} \left(\hat{\theta}(T^{(i)}_n)-\bar{\theta}\right)\left(\hat{\theta}(T^{(i)}_n)-\bar{\theta}\right)^\top.
\]
\end{algorithmic}
\label{algo.totcov}
\end{algorithm}

\section{Data Augmentation MCMC Details for Section 4.2}\label{s:linDetails}

Our Bayesian model used in the data augmentation MCMC of Section 4.2 follows that of \citet{awan2024statistical}: we model the original data as $y_i|x_i,\beta,\tau \sim N((1,x_i)\beta,\tau^{-1}I)$ and $x_i|\mu,\Phi \sim N_p(\mu,\Phi^{-1})$. We use priors $\beta|\tau \sim N_{p+1}(m,\tau^{-1}V^{-1})$, $\tau \sim \mathrm{Gamma}(a/2,b/2)$ (using the shape-rate parametrization), $\Phi\sim \mathrm{Wishart}_p(d,W)$, and $\mu\sim N(\theta,\Sigma)$.  In our setting, $p=1$. 
\begin{comment}
\begin{align*}
\mu &\sim N(\theta, \Sigma)\\
\Phi &\sim  \mathrm{Wishart}_p(d,W)\\
\beta | \tau &\sim  N_{p+1}(m,\tau^{-1}V^{-1})\\
\tau &\sim \mathrm{Gamma}(a/2, b/2), \text{using the shape, rate parameterization}\\
X_i | \mu, \phi &\sim N_p(\mu, \phi^{-1})\\
Y_i | X_i, \beta, \tau &\sim  N(X_i\beta, \tau^{-1} I)\\
\end{align*}
\end{comment}
In total, the parameters are $(\beta, \tau, \mu, \Phi)$,  
the hyperparameters are $(m, V, a, b, \theta, \Sigma, d, W)$, where $X$ is a matrix whose rows are $(1,x_i)$ and $Y$ is a vector whose entries are the $y_i$'s. Since the $x$'s and $y$'s are assumed to lie in $[0,1]$, we do not incorporate clamping into the mechanism density. 

\section{Literature Review for Drivers of Homeownership}
\label{s:lit_rev_home}

Classic research consistently demonstrated that minorities and low-income households faced pronounced barriers, and demographic and socio-economic factors, housing market, and broader economic policies critically influenced individuals' ability to own homes, contributing to persistent disparities in homeownership rates \citep{kuebler2013new, grinstein2011homeownership, segal1998trends, goodman2018homeownership}. Racial disparities have long been recognized as significant barriers to homeownership. Systemic inequalities and discriminatory lending practices significantly limited homeownership opportunities for minority households, perpetuating racial wealth gaps \citep{gyourko1999analyzing}. Another perspective influencing home ownership focused on financial stability and affordability. Unemployment significantly affects homeownership rates by undermining households’ financial stability and their ability to qualify for mortgages \citep{blanchflower2013does, gathergood2011racial}. In addition, unexpected financial shock such as medical expenses will also have mediate impact on financial stability. While health coverage is less frequently studied in direct relation to homeownership, it indirectly impacts housing stability by influencing household finances \citep{himmelstein2005illness, schoen2005taking}. For example, \cite{himmelstein2005illness} demonstrated that medical debt was a leading cause of financial distress and foreclosure, making health coverage a critical factor for maintaining homeownership. The spending power and cost of living are major perspectives in determining affordability that impacts homeownership accessibility. Income and poverty are fundamental measures of spending power influencing home affordability and access to credit \citep{henderson1983model, rohe1994effects}. Households with higher income levels have higher homeownership rates and the intention to house tenure, as income directly influences the ability to save for a down payment, qualify for a mortgage, and manage housing costs \citep{drew2014believing}. Poverty also disproportionately affects minority households, further entrenching racial disparities in homeownership rates \citep{rohe1994impact, segal1998trends}. In terms of population density, \cite{ross2008neighborhood} noted that high-density urban areas often had distinct housing markets, where elevated property prices and limited supply reduce overall homeownership rates. Population density is frequently controlled in study of socio-economic determinants of homeownership. By accounting for differences in urban, suburban, and rural housing markets, population density helps isolate the effects of key factors such as race, income, and housing costs. Population density was also controlled to distinguish the direct effects of racial inequality and economic challenges on homeownership \cite{blanchflower2013does, desilva2012housing}.

\section{ACS Study with Logistic Regression}
\label{s.acs_study_logistic}

We now consider an alternative model to study the drivers of homeownership which consists in logistic regression where the response (which is a proportion of homeownership within each county) is weighted by the total population in the counties. In this way counties with larger populations get weighted more compared to those with lower populations.

As for linear regression, we perform a similar simulation study to that in Section \ref{s.home_sim} based on the coefficient estimates from the non-private data. The results of this study (using same parameters as before) can be found in Table \ref{tab:coverage_results_logistic} where we can see that all methods are close the desired nominal covarage of $1 - \alpha = 0.95$. The Naive seems to slightly undercover the coefficients (under $H_0$ and $H_A$) for Indigenous and Housecost variables, but otherwise is in line with the non-private version, while the proposed PUMBA is close to the nominal level or generally above it (thereby reducing type-I errors).

\begin{table}[t]
    \centering
    \footnotesize
    \renewcommand{\arraystretch}{1.0}
    \setlength{\tabcolsep}{6pt}
    \begin{threeparttable}
    \caption{Coverage probabilities for Non-DP, Naive, and PUMBA under the null and alternative hypotheses using logistic regression (weighted with county population). The number of simulations is $B = 1{,}000$, and the expected nominal level is $1 - \alpha = 0.95$.}
    \label{tab:coverage_results_logistic}
    \begin{tabular}{lcccccc}
        \toprule
         & \multicolumn{3}{c}{$H_0$} & \multicolumn{3}{c}{$H_A$} \\
         & Non-DP & Naive & PUMBA & Non-DP & Naive & PUMBA \\
        \midrule
        Intercept    & 0.946 & 0.933 & 0.954 & 0.959 & 0.944 & 0.978 \\
        Black        & 0.961 & 0.959 & 0.961 & 0.946 & 0.944 & 0.949 \\
        Asian        & 0.960 & 0.962 & 0.964 & 0.938 & 0.935 & 0.954 \\
        Indigenous   & 0.954 & 0.923 & 0.948 & 0.943 & 0.908 & 0.933 \\
        Other        & 0.948 & 0.948 & 0.950 & 0.960 & 0.953 & 0.967 \\
        Unemployed   & 0.948 & 0.940 & 0.953 & 0.957 & 0.947 & 0.957 \\
        Poverty      & 0.952 & 0.950 & 0.958 & 0.955 & 0.942 & 0.973 \\
        No Insurance & 0.954 & 0.952 & 0.958 & 0.974 & 0.967 & 0.987 \\
        Housecost    & 0.947 & 0.939 & 0.949 & 0.948 & 0.924 & 0.976 \\
        Pop. Density & 0.953 & 0.951 & 0.952 & 0.945 & 0.944 & 0.960 \\
        \bottomrule
    \end{tabular}
    \end{threeparttable}
\end{table}

Having confirmed the good behavior of the three methods, we now compare them when applied directly to the data. The results are presented in Table \ref{tab:regression_results_logistic} where, compared to the analysis in Section \ref{s:homeownership}, we now see a change of sign for the Indigenous variable. Overall, all three methods agree on coefficients estimates and significance of all variables. 

A possible explanation for the sign change for the Indigenous variable is as follows: The main difference between the two models is the weighting. In the linear regression, all counties are given equal weight, whereas the logistic has counties weighted by their population. We suspect that there are some small counties with predominantly indigenous residents, with low home ownership. Because of these counties, the linear regression identifies that if a county has a high proportion of indigenous residents, it may be similar to these small counties. On the other hand, the logistic regression puts much smaller weight on small counties. This model says that (mostly in large counties), having a higher indigenous population has a very small impact on home ownership (compared to the other racial groups). The choice of which model is more appropriate may depend on the context.

\begin{table}[t]
    \centering
    \footnotesize % Reduce font size to save space
    \renewcommand{\arraystretch}{1.0} % Tighter row spacing
    \setlength{\tabcolsep}{6pt} % Tighter column spacing
    \begin{threeparttable}
    \caption{Coefficient Estimates and corresponding p-values (in parentheses) for Non-DP approach (first column); Naive approach (second column) and PUMBA (third column). Significance levels: * $< 0.05$, ** $< 0.01$, *** $< 0.001$.}
    \label{tab:regression_results_logistic}
    \begin{tabular}{l
        S[table-format=1.3]@{\,}c
        S[table-format=1.3]@{\,}c
        S[table-format=1.3]@{\,}c}
        \toprule
         & \multicolumn{2}{c}{Non-DP} & \multicolumn{2}{c}{Naive} & \multicolumn{2}{c}{PUMBA} \\
        & {Estimate} & {(p-value)} & {Estimate} & {(p-value)} & {Estimate} & {(p-value)} \\
        \midrule
        Intercept  & 1.119 & ($<$0.001) *** & 1.118 & ($<$0.001) *** & 1.120 & ($<$0.001) *** \\
        Black      & -0.925 & ($<$0.001) *** & -0.925 & ($<$0.001) *** & -0.924 & ($<$0.001) *** \\
        Asian      & -2.631 & ($<$0.001) *** & -2.630 & ($<$0.001) *** & -2.630 & ($<$0.001) *** \\
        Indigenous & 0.021 & (0.001) ** & 0.024 & ($<$0.001) *** & 0.031 & ($<$0.001) *** \\
        Other      & -1.491 & ($<$0.001) *** & -1.493 & ($<$0.001) *** & -1.490 & ($<$0.001) *** \\
        Unemployed & -0.692 & ($<$0.001) *** & -0.677 & ($<$0.001) *** & -0.679 & ($<$0.001) *** \\
        Poverty    & -2.934 & ($<$0.001) *** & -2.932 & ($<$0.001) *** & -2.930 & ($<$0.001) *** \\
        No Insurance & -0.131 & ($<$0.001) *** & -0.127 & ($<$0.001) *** & -0.128 & ($<$0.001) *** \\
        Housecost  & 0.111 & ($<$0.001) *** & 0.111 & ($<$0.001) *** & 0.111 & ($<$0.001) *** \\
        Pop. Density & \num{-2.376e-05} & ($<$0.001) *** & \num{-2.376e-05} & ($<$0.001) *** & \num{-2.37e-05} & ($<$0.001) *** \\
        \bottomrule
    \end{tabular}
    \end{threeparttable}
\end{table}

\bibliographystyle{chicago}

\bibliography{bibliography}
\end{document}